\def\be{\begin{equation}}
\def\ee{\end{equation}}
\def\ba{\begin{array}{c}}
\def\ea{\end{array}}
\newcommand{\bea}{\begin{eqnarray}}
\newcommand{\eea}{\end{eqnarray}}
\newcommand{\bbr}{\br\!\br}
\newcommand{\kkt}{\kt\!\kt}
\newcommand{\pbr}{\prec\!\!}
\newcommand{\pkt}{\!\!\succ\,\,}
\newcommand{\kt}{\rangle}
\newcommand{\br}{\langle}
\newtheorem{thm}{Theorem}
\newenvironment{proof}{\noindent
 {\bf Proof.}}{\hfill$\square$\vspace{3mm}\endtrivlist}
\begin{document}

\begin{center}

{\Large \bf

Composite quantum Coriolis forces

}


\vspace{0.8cm}

  {\bf Miloslav Znojil}$^{1,2,3}$

\end{center}


%
%
%
%

\vspace{10mm}

 $^{1}$ {The Czech Academy of Sciences,
 Nuclear Physics Institute,
 Hlavn\'{\i} 130,
250 68 \v{R}e\v{z}, Czech Republic, {e-mail: znojil@ujf.cas.cz}}


 $^{2}$ {Department of Physics, Faculty of
Science, University of Hradec Kr\'{a}lov\'{e}, Rokitansk\'{e}ho 62,
50003 Hradec Kr\'{a}lov\'{e},
 Czech Republic}

$^{3}$ Institute of System Science, Durban University of Technology,
Durban, South Africa



\vspace{10mm}

\section*{Abstract}

In a consistent quantum theory
called ``non-Hermitian interaction picture'' (NIP),
the \textcolor{black}{standard} quantum Coriolis operator $\Sigma(t)$
emerges whenever
the observables of a unitary
system
are given in their
quasi-Hermitian and non-stationary
rather than ``usual''
representations.
\textcolor{black}{With
$\Sigma(t)$ needed, in NIP, in both
the Schr\"{o}dinger-like and
Heisenberg-like dynamical evolution equations}
we show that another, amended and potentially simplified theory
can be based on an auxiliary
$N-$term factorization of the
Dyson's Hermitization map $\Omega(t)$. The knowledge of this
\textcolor{black}{factorization is shown to
lead to a multiplet of alternative eligible
Coriolis forces $\Sigma_{n}(t)$ with $n=0,1,\ldots,N$}.
The related formulae for the measurable
predictions \textcolor{black}{constitute a new
formalism called ``factorization-based non-Hermitian
interaction picture'' (FNIP)}.
The \textcolor{black}{conventional NIP
formalism (where $N=1$)} becomes complemented \textcolor{black}{
by an $(N-1)$-plet of its innovative ``hybrid''}
alternatives. Some of the respective
{\it ad hoc\,} adaptations of observables
may result in
\textcolor{black}{an optimal representation of quantum dynamics}.

\newpage

\section{Introduction}

\textcolor{black}{In this paper a
thoroughly innovated formulation of quantum mechanics
(to be called
``factorization-based non-Hermitian interaction picture'', FNIP)
will be proposed and described.
In contrast to the majority of the older textbooks
offering a conventional
presentation of
quantum theory
(cf., e.g., \cite{Messiah}),
the operators of observables will be assumed non-Hermitian
and non-stationary.
In this context, reviews \cite{Geyer,Carl,SIGMA,ali} and \cite{book}
or, at least,
sections~\ref{druhasekce} and~\ref{sectiontwo}
below may be recommended for introductory reading.
}

\textcolor{black}{The dominant inspiration of our present
innovative methodical proposal
may be seen in the existence of certain exactly solvable unitary
quantum models in which the evolution is controlled by a stationary
non-Hermitian Hamiltonian $H \neq H(t)$ (for an example see
Appendix A below) and, more importantly, in the existence of
the exactly solvable unitary
quantum models in which the dynamics is represented by a set of
certain
manifestly time-dependent, non-stationary
non-Hermitian operators of observables denoted, say,
as $\Lambda_0(t)$, $\Lambda_1(t)$ etc (see, e.g., review~\cite{SIGMA}
for more details).
}

\textcolor{black}{Our interest in the
latter and, admittedly, important but still
slightly exotic subclass of the solvable non-Hermitian quantum models
was motivated, first of all, by the related mathematics.
Indeed, in both the stationary and non-stationary scenarios one must,
after all, re-Hermitize the
observables in terms of a respective stationary
Dyson map $\Omega$ \cite{Geyer,ali,Dyson}
or
of its time-dependent analogue
$\Omega(t)$ \cite{SIGMA,timedep,FringMou}.
Such a Hermitization
has the form
 \be
 \Lambda_k(t)=\Omega(t)\,\lambda_k(t)
 \Omega^{-1}(t)\,\ \ \ {\rm with} \ \ \ \lambda_k(t)=\lambda_k^\dagger(t)\,,
 \ \ \ \ \ k=0,1,\ldots
 \,
 \label{usuri}
 \ee
of a suitable explicit (or, at least, possible) transformation
of all of the operators of observables.
For consistency reasons it is unavoidable \cite{Geyer} but in
applications it represents
one of the decisive technical challenges.}

\textcolor{black}{
In such a context
our attention has been attracted by a more or less purely empirical
observation that
in virtually all of the above-mentioned solvable models the authors managed to
circumvent the technical obstacles via an
auxiliary Dyson-map-factorization assumption
 \be
 \Omega(t)=\Omega_N(t)\,\Omega_{N-1}(t)\,\ldots \,\Omega_{1}(t)\,.
 \label{fffs}
 \ee
A deeper study of this coincidence has been performed in \cite{[380],[381]}.
For the sake of simplicity we decided to keep the Dyson map stationary,
$
\Omega(t)=\Omega(0)=\Omega
$, i.e., we decided to work just in
the simpler
formulation of quantum mechanics called
``non-Hermitian Schr\"{o}dinger picture'' (NSP).
The decision proved productive
because we revealed that
after the
factorization (\ref{fffs}) at a nontrivial $N \geq 2$,
the two conventional and well known
forms of the NSP theory (cf. \cite{ali})
become complemented
by a fairly rich menu of its $N-1$
alternatives
called ``hybrid NSP'' (HNSP).
Via a few schematic toy-model examples we also demonstrated,
in \cite{[380],[381]},
that the availability of such a menu
opens, for a given stationary input Hamiltonian $H$,
the possibility of the choice
of an ``optimal'' version of the HNSP-based
strategy of the quantum model-building.
}

\textcolor{black}{
The stationarity constraints $H \neq H(t)$ and $\Omega \neq \Omega(t)$
have subsequently been removed
in \cite{[382]}. We choose there $N=2$, and we
managed to find, describe and illustrate there
the specific merits of a unique ``hybrid'' $N=2$
reformulation of quantum mechanics
in non-Hermitian interaction picture.
Paradoxically, one of the weaknesses of the latter
result lied in
the process of its derivation
because it seemed to indicate that all of
its $N>2$
generalizations would be trivial.
Later on we imagined that this is not the case.
On the contrary, we found that the transition
from the choice of $N=2$
to the general $N>2$ scenarios
proved truly remarkable
from both the mathematical and phenomenological points of view.
And this is also the main
message to be delivered by
our present paper.
}

The presentation of our results will
be preceded by the introductory
review of the state of art in
sections \ref{druhasekce}
and \ref{sectiontwo}.
The first part of the new, $N>2$ amendment of the
non-stationary, NIP theory will be then, in
section \ref{sectionthree},
devoted to the construction
of the general composite Coriolis forces and to the
explanation of their role
in the Schr\"{o}dinger-like NIP evolution equations.

In
section \ref{sectionfour}
this result will be complemented by
the derivation of the $N>2$ form of the
Heisenberg-like equations controlling the
characteristic NIP
time-dependence of the
operators representing the observables.
Moreover,
an outline of dynamics in
the statistical quantum mechanics will be provided,
with the state of the system being merely
specified as
a mixture of pure states (i.e., by the so called
density matrix).

\textcolor{black}{
In general, the proposed FNIP formalism is supposed to be useful,
efficient and applicable to
any unitary quantum system in which there appears
the possibility of a simplification of its dynamical evolution
equations. This is to be realized
by means of a suitable factorization (\ref{fffs}) of the
overall non-stationary Dyson map $\Omega(t)$
entering the simplification ansatz (cf. Eq.~(\ref{anz}) below).
An
explicit illustrative example of such an eligible
system will be provided and described in section \ref{appb}.
We will show there that for the specific non-stationary
non-Hermitian
anharmonic-oscillator model of Fring and Tenney \cite{FT},
the fundamental factorization (\ref{fffs})
can even be realized
non-numerically and
in several alternative arrangements.
To such a climax of our paper an}
extensive discussion is added in
section~\ref{sectionfive}, and our message is finally summarized in
section~\ref{sectionsix}.

\section{\textcolor{black}{Classical Coriolis force and its quantum analogue.}\label{druhasekce}}

The phenomenon called
``Coriolis force'' \cite{Coriolis,Coriolisb} emerges,
in classical mechanics, due to the
non-inertial choice
of the frame of reference.
A quantum analogue of such a force
(i.e., the main subject of our present interest)
enters the game when one considers
Schr\"{o}dinger equation
 \be
 {\rm i}\frac{d}{dt}\,|\psi(t)\pkt=
  {}{\mathfrak{h}}(t)\,|\psi(t)\pkt
 \label{sekv}
 \ee
in
a conventional Hilbert space of states
(say, ${\cal L}$)
in which the Hamiltonian is assumed self-adjoint,
  \be
  {}{\mathfrak{h}}(t)=
  {}{\mathfrak{h}}^\dagger(t)\,.
  \label{hershaft}
  \ee
Such a constraint is equivalent to the unitarity of
the evolution \cite{Stone} so that, in some sense, it
specifies a quantum analogue of
the classical choice of the inertial reference frame.

A quantum analogue of transition to the
non-inertial frame
\textcolor{black}{may be interpreted as mimicking a
gravitational force (see a deeper development of this idea in
\cite{ju22}).
In the language of mathematics the parallelism}
can be encountered when one preconditions the
ket-vector element of ${\cal L}$ using the Dyson-inspired map,
 \be
 |\psi(t)\pkt = \Omega(t)\,|\psi(t)\kt\,.
 \label{anz}
 \ee
One may expect the emergence of
a quantum analogue of the Coriolis force
in the various
realistic applications of preconditioning (\ref{anz})
\textcolor{black}{(see, e.g., \cite{[thi5]})}.
Indeed, the
invertible non-stationary operator $\Omega(t)$
may simulate important features of dynamics, provided only that
the
transformations $\Omega(t)$ are
sufficiently flexible.

In
a way recommended and tested by Dyson \cite{Dyson}
such operators may even be
non-unitary yielding nontrivial product
 \be
 \Omega^\dagger(t)\,\Omega(t)=\Theta(t)\,.
 \label{nonuni}
 \ee
Even when $\Theta(t) \neq I$
(i.e., even when the preconditioning is non-unitary),
the symbol $|\psi(t)\kt$ may still be treated
as a useful representative of the state
of the quantum system \cite{Geyer}.
In general, it can be interpreted as a
ket-vector element
of an auxiliary, ``fictious'', manifestly unphysical
but mathematically friendlier
Hilbert space (say, ${\cal F}$) \cite{SIGMA}.
Subsequently, the insertion of ansatz (\ref{anz})
converts the textbook Schr\"{o}dinger
Eq.~(\ref{sekv}) living in ${\cal L}$ into its equivalent
version in ${\cal F}$,
 \be
 {\rm i}\frac{d}{dt}\,|\psi(t)\kt= G(t)\,|\psi(t)\kt\,,
 \ \ \ \
 G(t)=H(t)-\Sigma(t)\,,
 \ \ \ \ H(t)=\Omega^{-1}(t)\,
 {}{\mathfrak{h}}(t)\,\Omega(t)\,
 \label{resekv}
 \ee
where
 \be
 \Sigma(t)=\frac{\rm i}{\Omega(t)}\,\dot{\Omega}(t)\,,\ \ \ \
 \dot{\Omega}(t)=
 \frac{d}{dt}\Omega(t).
 \label{doresekv}
 \ee
The latter operator $\Sigma(t)$ defined
in terms of the
known non-stationary Dyson map $\Omega(t)$
is precisely the
quantum analogue of the classical Coriolis force.

Equation (\ref{resekv}) plays the role of
a starting point of
a
reformulation of quantum mechanics
in non-Hermitian interaction picture (NIP)
as described
in paper \cite{timedep}
and reviews \cite{book,FringMou,NIP,Bishop,Andreas}.
In comparison,
the older, stationary
non-Hermitian Schr\"{o}dinger picture (NSP,
cf. reviews \cite{Geyer,Carl,ali}) is much simpler to apply.
In particular,
in NSP one finds that the
Coriolis forces are absent, $\Sigma(t)=0$.
As a consequence,
the NSP generator $G(t)$ in (\ref{resekv}) coincides then with
the hiddenly Hermitian NSP Hamiltonian $H(t)$.

Certainly, the non-stationary approach
is technically much more
challenging.
One is forced to
search,
more intensively,
for the
simplifications of the NIP formalism.
In this
sense,
the main purpose of our present paper
can briefly be formulated as a transfer of some of the most recent
technical amendments
of the stationary NSP quantum theory
to the non-stationary
NIP context.

In the early stages of such a project,
naturally, we were well aware
of the fact that in the non-stationary NIP framework
the emergence of the Coriolis forces would really lead to a number of the
new technical challenges.
This was one of the reasons why we did not dare to move, in
the subsequent non-stationary framework of Ref.~\cite{[382]},
beyond the first non-trivial non-stationary $N=2$ setup.
Nevertheless,
the appeal and urgency of some of the related realistic applications
of the NIP approach
(say, in the coupled-cluster context of review \cite{Bishop})
forced us to return to the description of the extension of the
NIP analysis of paper \cite{[382]}
to the general case in which one would be able to use the more general
ansatz (\ref{fffs}) with $N>2$.

\section{Composite Dyson maps (stationary case)\label{sectiontwo}}

Although the abstract
formulation of the
stationary version of the
quasi-Hermitian quantum mechanics (QHQM, \cite{Geyer})
is now more than thirty years old,
the widely accepted opinion is that ``it is
generally very difficult to implement'' (cf. p. 1216 in \cite{ali}).
Still, after a number of technical amendments
\cite{Carl,ali,book,Carlbook} the related
model-building strategy became enormously popular \cite{book}.
One of the reasons may be
seen in the extremely helpful assumption
of
the stationarity of the Dyson
map
$\Omega(t)=\Omega(0)=\Omega$
as well as of the related
operator $\Theta(t)=\Theta(0)=\Theta\,$
called the Hilbert-space metric.
In some models, indeed,
this made the
practical applicability of the
QHQM approach
(mentioned in the second line of Table~\ref{hybr})
comparable with the use of the
conventional
quantum theory using Hermitian Hamiltonians
and mentioned in the first line of Table~\ref{hybr}.

\begin{table}[h]
\caption{Three alternative realizations of unitarity in stationary NSP framework.
}
 \label{hybr}
\vspace{0.3cm}
\centering
\begin{tabular}{||l|l|l|l||}
\hline \hline
 \multicolumn{1}{||c|}{quantum theory}
 & \multicolumn{1}{|c|}{Hamiltonian}
  &  \multicolumn{1}{|c|}{metric}
  &
    \multicolumn{1}{|c||}{condition}
 \\
 \hline \hline
 conventional, \cite{Messiah}
   &
 $\mathfrak{h}$&
 $I$
 &
 $\mathfrak{h}^\dagger=\mathfrak{h}$
  \\
  Dyson-inspired, \cite{Geyer}
 &$H=\Omega^{-1}\,\mathfrak{h}\,\Omega$&
 $\Theta=\Omega^\dagger\,\Omega$
 &
 $H^\dagger\,\Theta=\Theta\,H$
 \\
 hybrid  \cite{[380]},
  $\,\Omega=\Omega_2\,\Omega_1$
 &
 $H_{1}=\Omega_1\,H\,\Omega_1^{-1}$&
 $\Theta_2=\Omega_2^\dagger\,\Omega_2$
 &
 $H_{1}^\dagger\,\Theta_2=\Theta_2\,H_1$
 \\
 \hline \hline
\end{tabular}
\end{table}

In the stationary QHQM {\it alias\,} NSP framework
the unitary evolution of the quantum system in question
is in fact represented, simultaneously, in the
two non-equivalent Hilbert spaces,
viz., in the unphysical
space
${\cal F}={\cal H}_{mathematical}$
(with the conventional trivial
identity-operator metric) and in its
``physical-interpretation'' partner (say, ${\cal H}$
or ${\cal H}_{physical}$,
differing from ${\cal F}$
just by an amendment of the inner product, $I \to \Theta$).
From a purely pragmatic perspective such an enhancement of the
flexibility of the formalism
(and, presumably, a significant simplification of the Hamiltonian)
just transferred the
potential mathematical difficulties (if any)
from the conventional self-adjoint operator $\mathfrak{h}$
to the new nontrivial self-adjoint operator $\Theta$.

In a brief letter~\cite{[380]}
we proposed, therefore,
another, ``hybrid'' theoretical option
(cf. the third line of Table~\ref{hybr}).
A
more detailed
exposition
of the idea was then provided in
\cite{[381]}. We obtained
a reformulation of the
unitary theory
called hybrid non-Hermitian Schr\"{o}dinger picture (HNSP).

A formal mathematical guide to the latter amendment of the theory
has been found in the assumption
of an $N-$term factorization
of the
Dyson map, i.e., in a time-independent
special case of Eq.~(\ref{fffs}),
 \be
 \textcolor{black}{\Omega=\Omega_N\,\Omega_{N-1}\,\ldots \,\Omega_{1}\,.}
 \label{sfffs}
 \ee
In Ref.~\cite{[380]} we outlined only
the main consequences of
the special, $N=2$ realization of such a
factorization.
A more consequent implementation of the idea
has been then performed in \cite{[381]}.
An
extension
of applicability of
the $N=2$ ansatz $\Omega=\Omega_2\,\Omega_{1}$ of Ref.~\cite{[380]}
to any integer $N$
has been achieved there in the form which is
recalled here
in Table \ref{enhybr}.
The parameter $j=0,1,\ldots,N$ numbers there the
alternative formulations of the
unitary quantum theory in its
mathematical (i.e., quasi-Hermitian) representation in
${\cal F}$.

\begin{table}[h]
\caption{Eligible versions of the \textcolor{black}{hybrid}
NSP Hamiltonians.  At
any $N$ and $j$,
the  \textcolor{black}{related hybrid} inner-product metrics
are defined in terms of
products
$\,\Omega_{[j]}=\Omega_N\,\Omega_{N-1}\,\ldots
 \,\Omega_{j+1}$
(cf.~\cite{[381]}).
}
 \label{enhybr}
\vspace{0.3cm}
\centering
\begin{tabular}{||c||l|l|l||}
\hline \hline
  $j$
 & \multicolumn{1}{|c|}{Hamiltonian} &
    \multicolumn{1}{|c|}{physical metric}
    &
    \multicolumn{1}{|c||}{ quasi-Hermiticity }
 \\
\hline \hline
 0
 &$H=H_0 $ (non-hybrid case)
 &$ \Theta=\Theta^{[0]}=\Omega_{[0]}^\dagger\,\Omega_{[0]} $
 &
 $H^\dagger\,\Theta=\Theta\,H$
 \\
 \hline
 1
 &$H_1=\Omega_{1}\,H_0\,\Omega_{1}^{-1}
 $
 &$\Theta^{[1]}=\Omega_{[1]}^\dagger\,\Omega_{[1]}$
 &
 $H^\dagger_1\,\Theta^{[1]}=\Theta^{[1]}\,H_1$
 \\
 2
 &$H_2=\Omega_{2}\,H_1\,\Omega_{2}^{-1}
 $
 &$\Theta^{[2]}=\Omega_{[2]}^\dagger\,\Omega_{[2]}$
 &
 $H^\dagger_2\,\Theta^{[2]}=\Theta^{[2]}\,H_2$
 \\
 ...&...&...&...\\
 $\!N$-$2\!$
 &$H_{N-2}=\Omega_{N-2}\,H_{N-3}\,\Omega_{N-2}^{-1}
 $
 &$\Theta^{[N-2]}=\Omega_{[N-2]}^\dagger\,\Omega_{[N-2]}$
 &
 $H^\dagger_{N-2}\,\Theta^{[N-2]}=\Theta^{[N-2]}\,H_{N-2}$
 \\
 $\!N$-$1\!$
 &$H_{N-1}=\Omega_{N-1}\,H_{N-2}\,\Omega_{N-1}^{-1}
 $
 &$\Theta^{[N-1]}=\Omega_{N}^\dagger\,\Omega_{N}$
 &
 $H^\dagger_{N-1}\,\Theta^{[N-1]}=\Theta^{[N-1]}\,H_{N-1}$
 \\
 \hline
 $N$
 &$\mathfrak{h}=\Omega_{N}\,H_{N-1}\,\Omega_{N}^{-1}\equiv H_N
 $& $I$ (Hermitian theory)
 &
 n.a.,  $\mathfrak{h}=\mathfrak{h}^\dagger$\\
 \hline \hline
\end{tabular}
\end{table}

In the Table
the basic features of the resulting general
hybridized QHQM theory
are given a recurrent form
which complements the recipe of Ref.~\cite{[381]}.
In the language of mathematics
just a more systematic partial transfer of the information about dynamics
from the Hamiltonian to the metric is performed.
This means that
every choice of $j$
specifies a different
physics-representing
Hilbert space
${\cal R}_{0}^{[j]}$.

Such double-indexed symbols denoting the individual
Hilbert spaces of interest were initially introduced, with a more abstract
motivation, in
\cite{[381]}.
Our present $(N+1)-$plet of the {\em physical\,} Hilbert
spaces can be found there to be just a subset of
a larger multiplet
arranged in a triangular-shaped lattice (cf. the
diagram~Nr.~32 in~\cite{[381]}).
Incidentally, all of the
other Hilbert-space elements
${\cal R}_{k}^{[j]}$ of the lattice with $k>0$
appeared
manifestly unphysical.

For the state-vector  elements of
our present physical $k=0$ Hilbert spaces
it appeared necessary to use
a specific, amended Dirac's bra-ket notation
convention. Its basic features are summarized in Table~\ref{huhybr}.
In particular, using an additional biorthonormality assumption,
also
a formal spectral representation of the Hamiltonians
becomes available.

\begin{table}[h]
\caption{Kets and bras of the physical Hilbert spaces ${\cal R}_0^{[j]}$
as represented in ${\cal F}$.
 }
 \label{huhybr}
\vspace{0.3cm}
\centering
\begin{tabular}{||c|c|c|c||}
\hline \hline
    \multicolumn{1}{||c|}{$j$}
 & \multicolumn{1}{|c|}{ket
 }& \multicolumn{1}{|c|}{bra
 }
  &  \multicolumn{1}{|c||}{ Hamiltonian
  }
 \\
\hline \hline
 0
 &$|_{[0]}\psi\kt \equiv |\psi\kt
 $
 &$\bbr_{[0]} \psi|=\br_{[0]}\psi|\,\Theta^{[0]}\!\equiv\!\bbr \psi|$
 &
 $H_0=\sum_\centerdot \,|_{[0]}\centerdot\kt E_\centerdot \bbr_{[0]} \centerdot| \equiv H$
 \\
 \hline
 1&$|_{[1]}\psi\kt=\Omega_1\,|_{[0]}\psi\kt$
 &$\bbr_{[1]} \psi|=\br_{[1]}\psi|\,\Theta^{[1]}$
 &
 $H_1=\sum_\centerdot \,|_{[1]}\centerdot\kt E_\centerdot \bbr_{[1]} \centerdot|$
 \\
 2
 &$|_{[2]}\psi\kt=\Omega_2\,|_{[1]}\psi\kt$
 &$\bbr_{[2]} \psi|=\br_{[2]}\psi|\,\Theta^{[2]}$
 &
 $H_2=\sum_\centerdot \,|_{[2]}\centerdot\kt E_\centerdot \bbr_{[2]} \centerdot| $
 \\
 ...&...&...&...\\
 $\!N\!-\!1\!$
 &$|_{[N-1]}\psi\kt=\Omega_{N-1}\,|_{[N-2]}\psi\kt$
 &$\bbr_{[N-1]} \psi|=\br_{[N\!-\!1]}\psi|\,\Theta^{[N\!-\!1]}$
 &
 $H_{N\!-\!1}=\sum_\centerdot \,|_{[N\!-\!1]}\centerdot \kt E_\centerdot \bbr_{[N\!-\!1]} \centerdot| $
 \\
 \hline
 $N$
 &$|_{[N]}\psi\kt\!=\Omega_N|_{[N\!-\!1]}\psi\kt\!\equiv\!|\psi\pkt\!$
 &$\!\bbr_{[N]} \psi|\!=\!\br_{[N]}\psi|\!\equiv\,\pbr \psi|\!$
 &
 $H_N=\sum_\centerdot \,|_{[N]}\centerdot\kt E_\centerdot \bbr_{[N]}
  \centerdot| \equiv \mathfrak{h}$
 \\
 \hline \hline
\end{tabular}
\end{table}

In spite of the full phenomenological equivalence of
the formulations of the theory at the different choices of $j$
(i.e., in spite of the $j-$independence
of the predictions of the theory)
the feasibility of the
practical mathematical search for the solutions of
Schr\"{o}dinger
equations living in the respective $j-$superscripted
Hilbert space ${\cal R}_{0}^{[j]}$
may vary quite strongly with $j$.
A
persuasive illustration of the phenomenon
was provided, via
several toy-model examples, in \cite{[380],[381],[382]}.
In most cases, unfortunately,
an optimal representation index $j$ may only be specified {\it a posteriori}.

In our present methodical considerations
it makes sense to
consider, therefore, both $N$ and $j$
as
quantities, an optimal choice of which crucially depends
on a preselected Hamiltonian.
In the stationary NSP
interpretation of all of
the three QHQM Tables \ref{hybr}, \ref{enhybr} and \ref{huhybr}
the individual operators of
the Hilbert-space metric were also assumed time-independent.
Nevertheless, the same Hilbert-space-classification pattern can be
equally well used
in the more general, non-stationary quantum theory
of our present interest.

\section{\textcolor{black}{Non-stationary systems and composite Coriolis forces}\label{sectionthree}}

In the non-stationary theory
the first feature to notice is
the emergence of the quantum Coriolis forces.
As a consequence, in a way paralleling the derivation of Eq.~(\ref{resekv}),
once we admit the manifest time-dependence of the
separate factor operators $\Omega_n=\Omega_n(t)$
and of their products $\Theta^{[j]}=\Theta^{[j]}(t)$,
the related Hamiltonians $H_j(t)$ themselves lose their
role of the generators of the evolution \textcolor{black}{of the states
in ${\cal R}_{0}^{[j]}$ at $j=0,1,\ldots,N$}.

\subsection{Schr\"{o}dinger equations for vectors $|_{[j]}\psi(t)\kt$}

The basic details of
the necessary upgrade of the theory
may be explained even without reference to the
factorization ansatz~(\ref{fffs}).
We may set $N=1$
and re-derive immediately the time-dependent
Schr\"{o}dinger Eq.~(\ref{resekv})
containing not only the general NIP Hamiltonian $H(t)$ but also
the general NIP Coriolis force of Eq.~(\ref{doresekv}).
Naturally. such a
version of the
non-stationary NIP formalism
may already be found described in the literature
(cf., e.g., \cite{NIP}).
Now, our attention has to be redirected to
the non-stationary
factorization-related theories which are
characterized by the choice of
a suitable larger $N \geq 2$.
Due to such a decision,
the two existing and sufficiently well known QHQM
model-building strategies
(corresponding to the choices of
$j=0$ and $j=N$ in Table \ref{enhybr})
have to be complemented by all of the
remaining eligible
options with $0< j< N$.

After a non-stationary
reinterpretation of symbols in Tables \ref{enhybr} and \ref{huhybr}
let us now fix the value of $N$
and describe the consequences. In a preparatory step
we only have to choose $j=N$ and
rewrite the conventional Schr\"{o}dinger Eq.~(\ref{sekv})
using an innovated, more explicit notation,
 \be
 {\rm i}\frac{d}{dt}\,|_{[N]}\psi(t)\kt=
  H_N(t)\,|_{[N]}\psi(t)\kt\,,\ \ \ \
  |_{[N]}\psi(t)\kt\ \in \ {\cal R}_0^{[N]}
  \,.
 \label{asekv}
 \ee
The operator $H_j(t)$ with maximal $j=N$ is
exceptional by being self-adjoint
not only
in the $N-$th physical Hilbert space ${\cal R}_0^{[N]}$
but also
in our ``mathematical'', auxiliary representation space ${\cal F}$.
The ``trivial'' conventional quantum mechanics of textbooks is obtained.
The knowledge of the Hamiltonian and of the
initial value of $|_{[N]}\psi(t)\kt$ at $t=0$ enables us to reconstruct the state
vector  $|_{[N]}\psi(t)\kt$ at any subsequent time $t>0$.

Once we
turn attention to the first nontrivial
reformulation of the NIP theory in the
next physical Hilbert space ${\cal R}_0^{[j]}$ with $j=N-1$, we may
recall relation
  \be
   H_N^\dagger(t)=
  H_N(t)=\Omega_N(t)\,H_{N-1}(t)\,\Omega_N^{-1}(t)\,.
  \label{uhersha}
  \ee
Having, in parallel, formula $|_{[N]}\psi(t)\kt=\Omega_N(t)\,|_{[N-1]}\psi(t)\kt$
and introducing the following family of the auxiliary tilded operators
 \be
 \widetilde{\Sigma}_{j}(t)
 =
 \frac{\rm i}{\Omega_{j}(t)}\,\dot{\Omega}_{j}(t)\,,\ \ \ \
 \dot{\Omega}_{j}(t)=
 \frac{d}{dt}\Omega_{j}(t)\,,\ \ \ \ j = N, N-1, \ldots, 2, 1\,,
 \label{udoresekv}
 \ee
we may
define the initial Coriolis force
${\Sigma}_{N}(t)=\widetilde{\Sigma}_{N}(t)$ and
rewrite the conventional Hermitian Schr\"{o}dinger Eq.~(\ref{asekv})
of textbooks as another equation living in
the neighboring, modified Hilbert space ${\cal R}_0^{[N-1]}$
with ket-vector elements $|_{[N-1]}\psi(t)\kt$,
 \be
 {\rm i}\frac{d}{dt}\,|_{[N-1]}\psi(t)\kt=
  G_{N-1}(t)\,|_{[N-1]}\psi(t)\kt\,,\ \ \ \
    \,
 G_{N-1}(t)=H_{N-1}(t)-\Sigma_N(t)\,.
 \label{uasekv}
 \ee
In the special case of $N=1$ the new generator $G_{N-1}(t)$
of the evolution of the kets
in ${\cal R}_0^{[N-1]}$
coincides with operator $G(t)$ of Eq.~(\ref{resekv}) of course.

Whenever $N>1$, the $j=N-1$ version~(\ref{uasekv})
of the NIP Schr\"{o}dinger equation is new.
It
differs not only
from
its conventional $j=N$ predecessor~(\ref{asekv})
but also from
its ultimate QHQM $j=0$ alternative (\ref{resekv}).
In a way paralleling
the stationary case,
the corresponding version of quantum theory
may be assigned the acronym ``hybrid NIP'' {\it alias\,} ``HIP'', therefore.

It is worth adding that
in a way illustrated by an elementary toy model in \cite{[382]},
the spectrum of
the HIP generators $G_{j}(t)$ with $j<N$ may be complex
irrespectively of the  possible unitarity of
the evolution of the physical quantum system
in question.
In other words,
even in the unitary dynamical regime
the ``false non-stationary
Hamiltonian'' operators $G_{j}(t)$ with $j<N$ cannot be
interpreted as quasi-Hermitian.

Our comment
was formulated in the $j=N-1$
context but it
applies
to the HIP theory
at any smaller~$j$.
In particular, in order to move
from $j=N-1$
to $j=N-2$ at $N>1$
we may set
$|_{[N-1]}\psi(t)\kt=\Omega_{N-1}(t)\,|_{[N-2]}\psi(t)\kt$.
This enables us to convert quasi-Hermitian Eq.~(\ref{uasekv})
into a new, modified form of Schr\"{o}dinger equation living in
the next physical Hilbert space,
 \be
 {\rm i}\frac{d}{dt}\,|_{[N-2]}\psi(t)\kt=
  G_{N-2}(t)\,|_{[N-2]}\psi(t)\kt\,,\ \ \ \
 G_{N-2}(t)=H_{N-2}(t)-{\Sigma}_{N-1}(t)\,
 \label{vuasekv}
 \ee
where $
  |_{[N-2]}\psi(t)\kt\ \in \ {\cal R}_0^{[N-2]}
  \,$ and where
 \be
 {\Sigma}_{N-1}(t)=\widetilde{\Sigma}_{N-1}(t)+
 \Omega_{N-1}^{-1}(t)\,\Sigma_{N}(t)\,\Omega_{N-1}(t)\,.
 \label{tvudoresekv}
 \ee
This result indicates that and how the construction can be continued iteratively.

\begin{thm}
\label{thmjedna}
In the $j-$th physical Hilbert space ${\cal R}_0^{[j]}$ with $j=N-k$,
the Schr\"{o}dinger equation controlling the time-evolution of the state-vector kets
$|_{[N-k]}\psi(t)\kt\ \in \ {\cal R}_0^{[N-k]}$ has the form
 \be
 {\rm i}\frac{d}{dt}\,|_{[N-k]}\psi(t)\kt=
  G_{N-k}(t)\,|_{[N-k]}\psi(t)\kt\,,\ \ \ \
  \,
  G_{N-k}(t)=H_{N-k}(t)-{\Sigma}_{N-k+1}(t)\,\label{gvuasekv}
 \ee
where formula
 \be
 {\Sigma}_{N-k+1}(t)=\widetilde{\Sigma}_{N-k+1}(t)+
 \Omega_{N-k+1}^{-1}(t)\,{\Sigma}_{N-k+2}(t)\,\Omega_{N-k+1}(t)\,
 \label{ktvudoresekv}
 \ee
defines the necessary composite Coriolis force recursively.
\end{thm}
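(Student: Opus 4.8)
\medskip\noindent
\textbf{Proof proposal.}
The plan is to argue by downward induction on $j$, equivalently by induction on $k=0,1,\ldots,N$, adopting the convention $\Sigma_{N+1}(t)\equiv 0$. The base case $k=0$ is nothing but the conventional Schr\"{o}dinger Eq.~(\ref{asekv}) at $j=N$: it is~(\ref{gvuasekv}) read at $k=0$ because then $G_N(t)=H_N(t)-\Sigma_{N+1}(t)=H_N(t)$, while~(\ref{ktvudoresekv}) at $k=1$ correctly returns $\Sigma_N(t)=\widetilde{\Sigma}_N(t)$ in accord with Eq.~(\ref{uasekv}). The already-derived steps $k=1$ and $k=2$, Eqs.~(\ref{uasekv}) and~(\ref{vuasekv})--(\ref{tvudoresekv}), then serve as an explicit check. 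The single mechanism to be iterated is the substitution $|_{[N-k]}\psi(t)\kt=\Omega_{N-k}(t)\,|_{[N-k-1]}\psi(t)\kt$ prescribed in Table~\ref{huhybr}, which transports the Schr\"{o}dinger equation at level $j=N-k$ to the one at level $j=N-k-1$.

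For the inductive step I would write $m:=N-k$ and assume that~(\ref{gvuasekv}) holds at level $m$, i.e. ${\rm i}\,\frac{d}{dt}|_{[m]}\psi(t)\kt=\big(H_m(t)-\Sigma_{m+1}(t)\big)|_{[m]}\psi(t)\kt$. Inserting $|_{[m]}\psi(t)\kt=\Omega_m(t)\,|_{[m-1]}\psi(t)\kt$ and differentiating the product yields ${\rm i}\,\dot{\Omega}_m|_{[m-1]}\psi\kt+{\rm i}\,\Omega_m\frac{d}{dt}|_{[m-1]}\psi\kt=\big(H_m-\Sigma_{m+1}\big)\Omega_m|_{[m-1]}\psi\kt$. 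Multiplying on the left by $\Omega_m^{-1}(t)$ and moving the first summand to the other side isolates the time derivative: ${\rm i}\,\frac{d}{dt}|_{[m-1]}\psi\kt=\Omega_m^{-1}H_m\Omega_m|_{[m-1]}\psi\kt-\Omega_m^{-1}\Sigma_{m+1}\Omega_m|_{[m-1]}\psi\kt-{\rm i}\,\Omega_m^{-1}\dot{\Omega}_m|_{[m-1]}\psi\kt$.

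It then only remains to recognise the three operator coefficients. The conjugation rule of Table~\ref{enhybr}, $H_m(t)=\Omega_m(t)H_{m-1}(t)\Omega_m^{-1}(t)$, reduces the first to $H_{m-1}(t)$; the definition~(\ref{udoresekv}) of the tilded operators identifies the third as $\widetilde{\Sigma}_m(t)={\rm i}\,\Omega_m^{-1}(t)\dot{\Omega}_m(t)$; and the middle term is by inspection the conjugated contribution $\Omega_m^{-1}(t)\Sigma_{m+1}(t)\Omega_m(t)$. Defining $\Sigma_m(t):=\widetilde{\Sigma}_m(t)+\Omega_m^{-1}(t)\Sigma_{m+1}(t)\Omega_m(t)$ gives ${\rm i}\,\frac{d}{dt}|_{[m-1]}\psi\kt=\big(H_{m-1}(t)-\Sigma_m(t)\big)|_{[m-1]}\psi\kt$, which is precisely~(\ref{gvuasekv})--(\ref{ktvudoresekv}) evaluated at $k+1$ (here $m=N-k$, so $m-1=N-(k+1)$), closing the induction.

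Since each step consists of a single application of the Leibniz rule followed by a left multiplication by $\Omega_m^{-1}$, I expect no genuine analytic obstacle; the one point requiring care is the index bookkeeping --- that the Coriolis force entering the generator at level $j$ is the one labelled $j+1$, that it gets dressed by the factor $\Omega_{j+1}$ through the conjugation in~(\ref{ktvudoresekv}), and that the recursion is seeded by $\Sigma_{N+1}\equiv 0$. As a final consistency check I would unroll~(\ref{ktvudoresekv}) down to $j=0$: the resulting $\Sigma_1(t)=\sum_{n=1}^{N}\big(\Omega_1^{-1}\cdots\Omega_{n-1}^{-1}\big)\widetilde{\Sigma}_n(t)\big(\Omega_{n-1}\cdots\Omega_1\big)$ should, by the Leibniz expansion of $\dot{\Omega}$ with $\Omega=\Omega_N\cdots\Omega_1$, coincide with the single-map Coriolis operator $\Sigma(t)={\rm i}\,\Omega^{-1}(t)\dot{\Omega}(t)$ of~(\ref{doresekv}), so that the composite construction reproduces the conventional theory at $j=N$ and the ordinary $N=1$ NIP theory at $j=0$.
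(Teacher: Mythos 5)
Your proposal is correct and follows essentially the same route as the paper: the paper's own ``proof'' simply points to the preceding derivation of the $k=1$ and $k=2$ cases (Eqs.~(\ref{uasekv}) and (\ref{vuasekv})--(\ref{tvudoresekv})), obtained by exactly your mechanism of substituting $|_{[m]}\psi(t)\kt=\Omega_m(t)\,|_{[m-1]}\psi(t)\kt$, applying the Leibniz rule, left-multiplying by $\Omega_m^{-1}(t)$ and invoking the intertwining relation of Table~\ref{enhybr}, and then asserts that the construction continues iteratively. You merely make the induction (with the seed $\Sigma_{N+1}\equiv 0$) and the closed-form unrolling of $\Sigma_1(t)$ explicit, both of which check out against Eqs.~(\ref{doresekv}) and (\ref{udoresekv}).
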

\begin{proof}
\textcolor{black}{The proof is elementary, provided by the text preceding the theorem.}
\end{proof}

\subsection{Schr\"{o}dinger equations for vectors $|_{[j]}\psi(t)\kkt$}

Let us now turn attention to the formulae listed in the
bra-vector column of Table \ref{huhybr}.
For the sake of brevity we will not add and write
the time-dependence arguments $(t)$ explicitly,
having still in mind the generic
non-stationarity of the underlying NIP and HIP formulations
of the theory.

As long as
all of our mathematical manipulations are assumed represented
in the unique and time-independent auxiliary (i.e., unphysical
but user-friendly)
Hilbert space ${\cal F}$,
we may amply use the fact that in such a space
the inner-product metric is trivial. This means that
the Hermitian conjugation converts
every state-vector bra $\bbr_{[j]}\psi(t) | $
into its dual, twice-marked
ket-vector representation $|_{[j]} \psi(t) \kkt$,
slightly easier to deal with.
Thus, at $j=N$, $N-1$, \ldots we have
 \be
 |_{[N]}\psi\kkt =  |_{[N]}\psi\kt\ =|\psi\pkt\,,
 \ \ \
 |_{[N-1]}\psi\kkt = \Omega_N^\dagger\, |_{[N]}\psi\kkt\,,
 \ \ \
 |_{[N-2]}\psi\kkt = \Omega_{N-1}^\dagger\, |_{[N-1]}\psi\kkt\,
 \ee
etc. The elementary form of these recurrences enables us to
prove not only that Eq.~(\ref{asekv})
may be assigned its trivial conjugate partner
 \be
 {\rm i}\frac{d}{dt}\,|_{[N]}\psi(t)\kkt=
  H_N^\dagger(t)\,|_{[N]}\psi(t)\kkt\,,\ \ \ \
  |_{[N]}\psi(t)\kkt\ \in \ {\cal R}_0^{[N]}
  \,
 \label{pasekv}
 \ee
but also that such a partnership extends, in recurrent manner,
to all of the
other relevant physical Hilbert spaces.

\begin{thm}
\label{thmdve}
In the $j-$th physical Hilbert space ${\cal R}_0^{[j]}$ with $j=N-k$,
the time-evolution of the
twice-marked state-vectors $|_{[N-k]}\psi(t)\kkt\ \in \ {\cal R}_0^{[N-k]}$
is controlled by
Schr\"{o}dinger equation
 \be
 {\rm i}\frac{d}{dt}\,|_{[N-k]}\psi(t)\kkt=
  G_{N-k}^\dagger(t)\,|_{[N-k]}\psi(t)\kkt\,
  ,\ \ \ \
  k=0,1,\ldots,N\,.
 \label{sgvuasekv}
 \ee
\end{thm}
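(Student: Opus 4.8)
The plan is to mirror the argument that established Theorem \ref{thmjedna}, but now applied to the twice-marked kets $|_{[N-k]}\psi(t)\kkt$ instead of the ordinary kets $|_{[N-k]}\psi(t)\kt$. The starting point is Eq.~(\ref{pasekv}), which is simply the Hermitian conjugate of the conventional Schr\"{o}dinger Eq.~(\ref{asekv}) rewritten in the ${\cal F}$-dual representation; since $H_N=H_N^\dagger$ in ${\cal F}$, this is the $k=0$ instance of the claimed formula~(\ref{sgvuasekv}). The induction then proceeds downward in $j=N-k$ exactly as before: assuming the statement holds at level $j+1$, one substitutes the recurrence $|_{[j]}\psi(t)\kkt = \Omega_{j+1}^\dagger(t)\,|_{[j+1]}\psi(t)\kkt$ into the level-$(j+1)$ equation and differentiates.

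Concretely, I would write $|_{[j+1]}\psi\kkt = (\Omega_{j+1}^\dagger)^{-1}|_{[j]}\psi\kkt$, insert this into ${\rm i}\frac{d}{dt}|_{[j+1]}\psi\kkt = G_{j+1}^\dagger|_{[j+1]}\psi\kkt$, apply the product rule, and multiply through on the left by $\Omega_{j+1}^\dagger$. The time-derivative of $(\Omega_{j+1}^\dagger)^{-1}$ produces a term proportional to $\Omega_{j+1}^\dagger \dot{(\Omega_{j+1}^\dagger)^{-1}} = -\dot\Omega_{j+1}^\dagger (\Omega_{j+1}^\dagger)^{-1}$, which is the Hermitian conjugate of $-\widetilde\Sigma_{j+1}$ up to the factor ${\rm i}$, i.e.\ it reassembles into $(\widetilde\Sigma_{j+1})^\dagger$ when combined with the ${\rm i}$ on the left-hand side. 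The conjugated Hamiltonian piece rearranges via $\Omega_{j+1}^\dagger\,G_{j+1}^\dagger\,(\Omega_{j+1}^\dagger)^{-1} = \bigl(\Omega_{j+1}^{-1} G_{j+1}\,\Omega_{j+1}\bigr)^\dagger$, and since $G_{j+1}=H_{j+1}-\Sigma_{j+2}$ one gets $H_j^\dagger$ minus the conjugate of $\Omega_{j+1}^{-1}\Sigma_{j+2}\Omega_{j+1}$. Collecting, the new generator is precisely $G_j^\dagger = \bigl(H_j - \widetilde\Sigma_{j+1} - \Omega_{j+1}^{-1}\Sigma_{j+2}\Omega_{j+1}\bigr)^\dagger = \bigl(H_j-\Sigma_{j+1}\bigr)^\dagger$, using the recursion~(\ref{ktvudoresekv}) that defines $\Sigma_{j+1}$.

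The one point that needs a little care — and which I would flag as the only genuine subtlety — is the bookkeeping of Hermitian conjugates under the recursion~(\ref{ktvudoresekv}). Because $\Sigma_{N-k+1}$ is built recursively from $\widetilde\Sigma_{N-k+1}$ and a \emph{similarity} transform (not a unitary one) of $\Sigma_{N-k+2}$ by $\Omega_{N-k+1}$, its adjoint does not have the same shape as itself; one must verify that the adjoint of the whole recursion is consistent with what the differentiation step produces at each level, rather than assuming $\Sigma_{N-k+1}^\dagger$ satisfies a recursion of the same form. In fact it is cleaner not to try to describe $\Sigma_{j}^\dagger$ intrinsically at all: one simply carries $G_{j+1}^\dagger$ as a single symbol through the conjugation-and-similarity step and reads off that the result equals $G_j^\dagger$ with $G_j$ the operator already named in Theorem~\ref{thmjedna}. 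With that organizing choice the computation collapses to the same short manipulation as in the untilded case, transported across the Hermitian conjugation in ${\cal F}$, and the induction closes for all $k=0,1,\ldots,N$.
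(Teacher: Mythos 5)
Your proposal is correct and follows essentially the same route as the paper: the paper's own (very terse) proof simply points to the relation $|_{[N-k]}\psi(t)\kkt=\Omega^\dagger_{N-k+1}(t)\,|_{[N-k+1]}\psi(t)\kkt$ from Table~\ref{huhybr} as the starting point, and your downward induction is exactly the computation that relation is meant to trigger. Your explicit verification that the conjugated similarity step reproduces $G_j^\dagger$ via the recursion~(\ref{ktvudoresekv}), together with the warning not to seek an intrinsic recursion for $\Sigma_j^\dagger$, fills in the details the paper leaves implicit.
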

\begin{proof}
One of the easiest ways of the construction of
the set of equations (\ref{sgvuasekv})
may start from the relation
 \be
 |_{[N-k]}\psi(t)\kkt=\Omega^\dagger_{N-k+1}(t)\,|_{[N-k+1]}\psi(t)\kkt
 \ee
which follows immediately from the definitions listed in the third column
of Table~\ref{huhybr}.
\end{proof}

\section{Operator evolution equations\label{sectionfour}}

In preceding section
we explained
that
the stationary NSP formulae of
Table \ref{enhybr}
may be assigned their very natural
non-stationary NIP
analogues.
Now we have to add that at every index $j$
of the physical Hilbert space ${\cal R}_0^{[j]}(t)$, the assumption of
the time-dependence of the related physical inner-product metric $\Theta^{[j]}(t)$
is connected with the generic
time-dependence of the energy-observable $H_j(t)$
(i.e., in our present terminology, of the Hamiltonian).
Simultaneously, it is necessary to imagine that
the
time-dependence of the same metric
must be also compatible with the possible
time-variability
of any other admissible observable, say,
$A_j(t)$.

\subsection{Heisenberg equations.\label{parHei}}

Necessarily \cite{Geyer,ali}, the latter operator must have
real spectrum and it must be
quasi-Hermitian,
 \be
 A_j^\dagger(t)\,\Theta^{[j]}(t)=\Theta^{[j]}(t)\,A_j(t)\,.
 \ee
By construction, the standard probabilistic physical
interpretation of the theory
is in fact in a one-to-one correspondence with the
validity of the latter
postulate.
Naturally, the non-stationarity of the metric
leads to certain
conceptual complications (cf. \cite{NIP} or Theorem Nr. 2 in \cite{ali}).
One of the most serious ones is
that the mathematics of the evolution becomes practically
prohibitively complicated
whenever the operator $A_j(t)$ of interest
(which is, by assumption, non-Hermitian in
our manipulation space ${\cal F}$)
remains time-dependent even after a return to its
conventional self-adjoint representation
in Schr\"{o}dinger picture with trivial metric at $j=N$.

For this reason
we will postulate, in what follows, that at $j=N$
(i.e., in effect, in Schr\"{o}dinger picture)
the observable of interest becomes
time-independent,
$A_N(t)=A_N(0)=\mathfrak{a}$.
Under this assumption we may recollect that
 \be
 \Omega_N(t)\,A_{N-1}(t)=\mathfrak{a}\,\Omega_N(t)\,
 \label{relat}
 \ee
(cf. the last line of
Table \ref{enhybr}).
Thus, in an almost complete
parallel with our preceding recursive
considerations we may
differentiate relation (\ref{relat}).
The result
 \be
 {\rm i}\frac{d}{dt}\,A_{N-1}(t)=
 A_{N-1}(t)\,\Sigma_N(t)-\Sigma_N(t)\,
 A_{N-1}(t)\,.
 \ee
can be perceived as the desired $j=N-1$ operator-evolution equation
in which the role of the generator
(i.e., of a ``false Hamiltonian'' \cite{[382]})
is played by precisely the same
Coriolis force as defined by Eq.~(\ref{udoresekv}) above.

In the same spirit we may differentiate the whole set of the
identities listed in Table \ref{enhybr},
 \be
 \Omega_{N-k}(t)\,A_{N-k-1}(t)=A_{N-k}(t)\,\Omega_{N-k}(t)\,,
 \ \ \ \ \
 k = 1,2, \ldots, N-1\,.
 \ee
The description of the HIP dynamics becomes completed.

\begin{thm}
\label{thmtri}
In the $j-$th physical Hilbert space ${\cal R}_0^{[j]}$ with $j=N-k$,
the time-evolution of any operator $A_j(t)$ representing an observable
is controlled by
Heisenberg equation
 \be
 {\rm i}\frac{d}{dt}\,A_{N-k}(t)=
 A_{N-k}(t)\,\Sigma_{N-k+1}(t)-\Sigma_{N-k_1}(t)\,
 A_{N-k}(t)\,.
 \ee
The role of the generator of evolution
is played by the composite Coriolis force of Eq.~(\ref{ktvudoresekv}).
\end{thm}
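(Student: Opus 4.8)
The plan is to argue by induction, running exactly parallel to the recursive definition~(\ref{ktvudoresekv}) of the composite Coriolis forces and to the derivation of Theorem~\ref{thmjedna}. Throughout I suppress the argument $(t)$. The structural facts to be used are the quasi-similarity identities $\Omega_{N-k}\,A_{N-k-1}=A_{N-k}\,\Omega_{N-k}$ displayed just before the theorem, supplemented by $\Omega_{N}\,A_{N-1}=\mathfrak{a}\,\Omega_{N}$ of Eq.~(\ref{relat}), together with the elementary rearrangement $\Omega_{j}^{-1}\dot\Omega_{j}=\frac{1}{\rm i}\,\widetilde{\Sigma}_{j}$ of Eq.~(\ref{udoresekv}). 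The induction is started at the observable $A_{N-1}$ and then propagated by the step $A_{N-k}\mapsto A_{N-k-1}$; the generator appearing on the right-hand side of the asserted equation is understood to be $\Sigma_{N-k+1}$ throughout.

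For the base case I differentiate Eq.~(\ref{relat}). Because the Schr\"{o}dinger-picture observable is, by the standing assumption of this section, stationary ($\dot{\mathfrak{a}}=0$), this yields $\dot\Omega_{N}\,A_{N-1}+\Omega_{N}\,\dot A_{N-1}=\mathfrak{a}\,\dot\Omega_{N}$. Inserting $\mathfrak{a}=\Omega_{N}\,A_{N-1}\,\Omega_{N}^{-1}$ on the right-hand side and then multiplying through on the left by $\Omega_{N}^{-1}$ isolates $\dot A_{N-1}=A_{N-1}\,\Omega_{N}^{-1}\dot\Omega_{N}-\Omega_{N}^{-1}\dot\Omega_{N}\,A_{N-1}$, i.e. ${\rm i}\,\dot A_{N-1}=A_{N-1}\,\widetilde{\Sigma}_{N}-\widetilde{\Sigma}_{N}\,A_{N-1}$. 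Since $\Sigma_{N}=\widetilde{\Sigma}_{N}$, this is precisely the claimed equation with $k=1$.

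For the inductive step I assume ${\rm i}\,\dot A_{N-k}=A_{N-k}\,\Sigma_{N-k+1}-\Sigma_{N-k+1}\,A_{N-k}$ and differentiate the identity $\Omega_{N-k}\,A_{N-k-1}=A_{N-k}\,\Omega_{N-k}$. Substituting the inductive expression for $\dot A_{N-k}$, using $A_{N-k}=\Omega_{N-k}\,A_{N-k-1}\,\Omega_{N-k}^{-1}$ (equivalently $A_{N-k}\,\Omega_{N-k}=\Omega_{N-k}\,A_{N-k-1}$) to bring every factor $A_{N-k}$ back to that form, and finally multiplying on the left by $\Omega_{N-k}^{-1}$, one arrives at
\[
 {\rm i}\,\dot A_{N-k-1}=A_{N-k-1}\,B-B\,A_{N-k-1}\,,\qquad
 B=\Omega_{N-k}^{-1}\,\Sigma_{N-k+1}\,\Omega_{N-k}+\widetilde{\Sigma}_{N-k}\,.
\]
The point is then that $B$ is nothing but $\Sigma_{N-k}$: this is exactly the recursion~(\ref{ktvudoresekv}) with $k$ replaced by $k+1$. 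Hence ${\rm i}\,\dot A_{N-k-1}=A_{N-k-1}\,\Sigma_{N-k}-\Sigma_{N-k}\,A_{N-k-1}$, which is the asserted equation at the next value of $k$, and the induction closes.

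The computation is purely mechanical; the only things needing care are the index bookkeeping and verifying that the $\widetilde{\Sigma}_{N-k}$-piece produced by $\Omega_{N-k}^{-1}\dot\Omega_{N-k}$ enters with exactly the sign and placement it has in~(\ref{ktvudoresekv}), so that $B$ really reassembles $\Sigma_{N-k}$. The genuinely load-bearing hypothesis — already visible in the base case — is the stationarity $A_{N}(t)\equiv\mathfrak{a}$: without it an inhomogeneous term $\Omega_{N}^{-1}\dot{\mathfrak{a}}\,\Omega_{N}$ would be generated and would propagate down the recursion, so that the evolution of the $A_{j}(t)$ would no longer be governed by a pure commutator with the composite Coriolis force.
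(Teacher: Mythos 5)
Your proof is correct and follows essentially the same route as the paper, which leaves the argument implicit in the text immediately preceding the theorem: the base case obtained by differentiating Eq.~(\ref{relat}) under the stationarity postulate $A_N(t)=\mathfrak{a}$, and the downward propagation obtained by differentiating the intertwining identities $\Omega_{N-k}(t)\,A_{N-k-1}(t)=A_{N-k}(t)\,\Omega_{N-k}(t)$ so that the recursion (\ref{ktvudoresekv}) reassembles the composite Coriolis force. Your explicit inductive bookkeeping (and your remark that the subscript $N-k_1$ in the theorem is a typo for $N-k+1$) is a faithful, fully detailed version of exactly that argument.
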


\subsection{Evolution equations for density matrices.}

In the NIP review \cite{NIP}
we considered, in our present terminology, just the
specific and non-hybrid $j=N$ theory.
We
pointed out that besides the knowledge of
the Hamiltonian and metric,
the description of dynamics
requires
also the factorization of the
metric
$\Theta(t)=\Omega^\dagger(t)\Omega(t)$
yielding the definition (\ref{doresekv}) of
the Coriolis force $\Sigma(t)$.

From a purely technical point of view
it is necessary to distinguish, first of all,
between the stationary NSP theory
(in which $\Theta(t)=\Theta(0)$ is time-independent)
and the non-stationary NIP theory
(in which the Coriolis forces emerge).
In both of these cases
a key technical role is played, in our
manifestly unphysical representation
space
${\cal F}$, by the
respective two
formulae
 \be
 |\psi^{(NSP)}(t)\kkt = \Theta^{(NSP)}(0)\,|\psi(t)\kt\ \ \ \ {\rm and}\ \ \ \
 |\psi^{(NIP)}(t)\kkt = \Theta^{(NIP)}(t)\,|\psi(t)\kt\,
 \label{forfort}
 \ee
(cf. the first line of Table \ref{huhybr}).
Obviously, in both of these formulae
it would make very good sense to
replace the
difficult task of the construction
of the {\em operator\,}
(i.e., in finite dimensions, of an $N$ by $N$
array of matrix elements of the metric) by the
construction
of the mere {\em vector\,}
(i.e., in finite dimensions, of the $N$
components of
$|\psi(t)\kkt$).

The latter observation sounds like a paradox
but its clarification is easy
because at any $j$
the physical Hilbert space ${\cal R}^{[j]}_0(t)$
is always
represented in the same auxiliary mathematical space
${\cal F}$.
In other words, it is sufficient to
recall Table \ref{huhybr} and to
work simply
with the sets of the $j-$dependent
biorthonormalized bases in ${\cal F}$.

In such a biorthonormal basis setting
it is only necessary to remind the users that
even the pure
state of the quantum system in question is
represented by the {\em pair\,} of the ket vectors
or, more formally,
by the elementary projector
 \be
 \pi_{\psi}(t)=|\psi(t)\kt \,\frac{1}{\bbr
 \psi(t)|\psi(t)\kt}\,\bbr \psi(t)|\,
 \label{[26]}
 \ee
(cf. also a more extensive commentary in \cite{Brody}).
Incidentally, such an observation simplifies transition to a
statistical version of QHQM
where the states become represented by the
\textcolor{black}{non-Hermitian and time-dependent version
of the density
matrix (cf., e.g., \cite{[thi6]})},
 \be
 {\varrho}(t)=\sum_{k}|\psi^{(k)}(t)\kt
 \,\frac{p_k}{\bbr \psi^{(k)}(t)|\psi^{(k)}(t)\kt}\,\bbr
 \psi^{(k)}(t)|\,.
 \label{[27]}
 \ee
Here, the ``weights'' or ``probabilities''
of the $k-$superscripted individual pure-state components
have to be normalized to one,
$\sum_{k} p_k=1$.

Let us now turn attention to the consequences.
First of all let us, once more, choose the non-hybrid NIP option
with $j=N$, i.e., let us
assume that $\Theta=\Theta(t)$ (needed, first of all,
in the definition (\ref{forfort}))
can vary with time.
In such a case
the
Coriolis force $\Sigma(t)$
of Eq.~(\ref{doresekv})
starts playing, suddenly, a decisive
dynamical role.
By the differentiation of the operator of metric
$\Theta(t)=\Omega^\dagger(t)\Omega(t)$
one obtains, first of all, the law
 \be
 {\rm i}\,\frac{d}{dt}\,\Theta(t)=
 \Theta(t)\,\Sigma(t)-\Sigma^\dagger(t)\,
 \Theta(t)\,
 \ee
controlling its
evolution in time.
In the light of our previous comment, nevertheless,
it makes sense to circumvent the solution of
such an equation for operators
in favor of the incomparably more economical
solution of Schr\"{o}dinger
equation (\ref{sgvuasekv}), i.e.,
 \be
 {\rm i}\frac{d}{dt}\,|\psi(t)\kkt= G^\dagger(t)\,|\psi(t)\kkt\,,
 \ \ \ \
 G^\dagger(t)=H^\dagger(t)-\Sigma^\dagger(t)\,.
 \label{reresekv}
 \ee
The latter equation has the form of the adjoint
partner
of Eq.~(\ref{resekv}) in ${\cal F}$.
It is important to notice
that, as a consequence,
the overlaps $\bbr \psi(t)|\psi(t)\kt$ remain constant in time.

Once we
recall the definitions
of states
(\ref{[26]}) or (\ref{[27]}) (with constant
weights $p_k \neq p_k(t)$),
the explicit construction of the metric
(i.e., of the operator which is, due to relation (\ref{forfort}), implicitly
present in $|\psi(t)\kkt$)
may be declared redundant \cite{NIP}.
As the most important
byproduct of such a simplification we finally
deduced, in \cite{NIP},
the equation
 \be
 {\rm
 i\,}\partial_t\, {\varrho}(t)= G(t){\varrho}(t)
 -{\varrho}(t)\,G(t)\,
 \ee
which formally
controls the evolution of the non-hybrid density matrices in Liouvillean
picture.

On this background the transition to the hybrid formalism is immediate.

\begin{thm}
\label{thmctyri}
In the $j-$th physical Hilbert space ${\cal R}_0^{[j]}$
with $0 \leq j \leq N$
the evolution of the mixed state represented by the
density matrix
 \be
 {\varrho}_j(t)=\sum_{k}|_{[j]}\psi^{(k)}(t)\kt
 \,\frac{p_k^{[j]}}{\bbr_{[j]} \psi^{(k)}(t)|_{[j]}\psi^{(k)}(t)\kt}\,
 \bbr_{[j]} \psi^{(k)}(t)|\,
 \label{b[27]}
 \ee
is prescribed by the
evolution equation
 \be
 {\rm
 i\,}\partial_t\, {\varrho}_j(t)= G_j(t){\varrho}_j(t)
 -{\varrho}_j(t)\,G_j(t)\,.
 \ee
\end{thm}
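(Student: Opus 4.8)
The plan is to derive the claimed Liouville-type equation directly from the two Schr\"odinger equations already at our disposal, namely Theorem~\ref{thmjedna} for the single-marked kets $|_{[j]}\psi(t)\kt$ and Theorem~\ref{thmdve} for the twice-marked kets $|_{[j]}\psi(t)\kkt$, working throughout in the common, trivially-metrized representation space ${\cal F}$. First I would Hermitian-conjugate, in ${\cal F}$, the equation ${\rm i}\,\frac{d}{dt}\,|_{[j]}\psi(t)\kkt=G_{j}^\dagger(t)\,|_{[j]}\psi(t)\kkt$ of Theorem~\ref{thmdve}; since in ${\cal F}$ the conjugate of $|_{[j]}\psi(t)\kkt$ is precisely the bra $\bbr_{[j]}\psi(t)|$, this yields the right-multiplicative law ${\rm i}\,\frac{d}{dt}\,\bbr_{[j]}\psi(t)|=-\,\bbr_{[j]}\psi(t)|\,G_{j}(t)$, complementing the left-multiplicative ket law ${\rm i}\,\frac{d}{dt}\,|_{[j]}\psi(t)\kt=G_{j}(t)\,|_{[j]}\psi(t)\kt$ of Theorem~\ref{thmjedna}.

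The one preparatory fact to record is that each normalization denominator $\bbr_{[j]}\psi^{(k)}(t)|_{[j]}\psi^{(k)}(t)\kt$ appearing in (\ref{b[27]}) is constant in time: differentiating it and substituting the two evolution laws above, the contributions proportional to $G_j(t)$ cancel, exactly as the overlaps $\bbr\psi(t)|\psi(t)\kt$ were seen to be conserved in the non-hybrid $j=N$ case near Eq.~(\ref{reresekv}). Since the weights $p_k^{[j]}$ are postulated to be $t$-independent as well, all time-dependence in $\varrho_j(t)$ resides in the rank-one blocks $|_{[j]}\psi^{(k)}(t)\kt\,\bbr_{[j]}\psi^{(k)}(t)|$. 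Differentiating (\ref{b[27]}) term by term and inserting the ket law on the left factor and the bra law on the right factor produces, for each $k$, the commutator $G_j(t)\,|_{[j]}\psi^{(k)}(t)\kt\,\bbr_{[j]}\psi^{(k)}(t)|-|_{[j]}\psi^{(k)}(t)\kt\,\bbr_{[j]}\psi^{(k)}(t)|\,G_j(t)$; re-summing with the fixed coefficients $p_k^{[j]}/\bbr_{[j]}\psi^{(k)}(t)|_{[j]}\psi^{(k)}(t)\kt$ then gives ${\rm i}\,\partial_t\varrho_j(t)=G_j(t)\varrho_j(t)-\varrho_j(t)G_j(t)$, which is the assertion.

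I do not expect a genuine obstacle here: once Theorems~\ref{thmjedna} and~\ref{thmdve} are in hand, this is essentially a one-line differentiation. The only point deserving care is the constancy of the denominators $\bbr_{[j]}\psi^{(k)}(t)|_{[j]}\psi^{(k)}(t)\kt$, which is what makes the quotient structure of (\ref{b[27]}) compatible with a clean commutator form; it relies squarely on the generator of $|_{[j]}\psi(t)\kkt$ being $G_j^\dagger(t)$ — i.e. on the mutual adjointness in ${\cal F}$ guaranteed by Theorem~\ref{thmdve} — rather than any unrelated operator. It is also worth stressing that $\varrho_j(t)$ is self-adjoint only with respect to $\Theta^{[j]}(t)$ and not in ${\cal F}$, so the right-hand side is a bona fide non-Hermitian commutator and no symmetrization is needed.
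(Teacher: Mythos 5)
Your proposal is correct and follows essentially the same route as the paper's own (very terse) proof: invoke Theorem~\ref{thmjedna} for the kets, conjugate the equation of Theorem~\ref{thmdve} using the triviality of Hermitian conjugation in ${\cal F}$ to get the bra law, and differentiate (\ref{b[27]}) term by term. Your explicit verification that the denominators $\bbr_{[j]}\psi^{(k)}(t)|_{[j]}\psi^{(k)}(t)\kt$ are conserved is a worthwhile detail the paper only states for the non-hybrid $j=N$ case near Eq.~(\ref{reresekv}), but it is the same argument, not a different one.
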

\begin{proof}
It is sufficient to recall
the Schr\"{o}dinger equations
of the Theorems~\ref{thmjedna} and \ref{thmdve}
controlling the time-evolution of the respective
state-vector kets
$|_{[j]}\psi(t)\kt$ and
$|_{[j]}\psi(t)\kkt$. Subsequently,
one just has to recall the triviality of the conventional
Hermitian conjugation in the
unphysical representation space~${\cal F}$.
\end{proof}


\section{\textcolor{black}{An exactly solvable illustrative
example of applicability
of the FNIP formulation of quantum mechanics\label{appb}}}


\textcolor{black}{One of the best {\em non-stationary\,} illustrative examples
supporting and motivating the
applicability
of the present
factorization-based NIP technique
may be found in Ref.~(\cite{FT}).
In the methodical setting the latter model
parallels its stationary predecessor of Ref.~\cite{BG}.
Hence, its main merit
lies
in the fact that in the factorization ansatz (\ref{fffs})
the number $N$ of the factors
(which partially depends on the way you count them)
is, in any case, nontrivial, $N>2$.}

\textcolor{black}{From the point of view of
mathematics the latter model is in fact exceptional
because {\em both\,} the initial, dynamical input information
about the system {\em and\,} the ultimate,
Hamiltonian-operator-based information about the
probabilistic interpretation of the unitary quantum system
in question are carried by the {\em local potentials\,}
entering the respective linear, ordinary differential
operators.}

\textcolor{black}{
In the literature the task is most often found feasible just in
the stationary version of the formalism
(for details see, say, the recent review paper
\cite{ali}). In such a context we recommended,
in \cite{[381]}, just
an innovative approach to the
Hermitizations
in which
one has a choice
between $N$
alternative
stationary
representations of the system.}

\textcolor{black}{
In the more general non-stationary case
(see its introduction in \cite{timedep}) the
emerging larger set of
the necessary evolution equations often becomes
almost prohibitively complicated
(see also the discussion of this point in section \ref{sectionthree} above).
For this reason, any simplification of the task
(including the one based on the non-stationary extension of the
alternative-representation idea of Ref.~\cite{[381]}) may be expected to play
a key role in applications.
}

\textcolor{black}{ In paper \cite{FT} the search for an exactly
solvable illustrative example of the general non-stationary NIP
Schr\"{o}dinger equation (i.e., in our present notation, of the
ordinary differential equation of the form (\ref{resekv})) was based
on a suitable ansatz (\ref{fffs}) where the Dyson map has been
factorized using $N=4$. Its separate factors were chosen in the
elementary exponential-operator forms
 \be
 \Omega_4(t)=e^{\alpha (t)x}\,,\ \
 \Omega_3(t)=e^{\beta (t)p^{3}}\,,\ \
 \Omega_2(t)=e^{i\gamma (t)p^{2}}\,,\ \
 \Omega_1(t)=e^{i\delta(t)p}\,
 \label{etas}
 \ee
where $p$ denoted the momentum and where the optional coefficient
functions $\alpha(t)$, $\beta(t)$, $\gamma(t)$ and $\delta(t)$ of
time $t$ were merely assumed sufficiently smooth and real. }

\textcolor{black}{ In terms of these functions the Coriolis force of
Eq.~(\ref{resekv}) was then easily evaluated to acquire closed
form
 $$
 \Sigma(t)= ix\dot{\alpha}+i\dot{\beta}p^{3}-(3\dot{\alpha}
 \beta +\dot{\gamma})p^{2}-(2i\gamma \dot{\alpha}+\dot{\delta})p-i\delta \dot{
 \alpha}\,.
 $$
This being specified the authors of \cite{FT} decided to analyze the family of the
unitary quantum systems defined by a specific generator possessing
the following ``wrong-sign'' anharmonic-oscillator form,
 $$
 G(t)=G(z,t)=p^{2}+\frac{m(t)}{4}z^{2}-\frac{g(t)}{16}z^{4}\,.
 $$
Both the mass term $m(t)$ and the coupling $g(t)$ were assumed real
and smooth while the choice of a suitable complex contour of $z$ was
inspired by Jones and Mateo \cite{JM}, i.e., specified by formula
$z=-2i\sqrt{1+ix}$ where $x\in \mathbb{R}$.
This made the Schr\"{o}dinger equation defined on a
real line of the ``coordinate'' $x$ (entering also ansatz
(\ref{etas})) so that the straightforward reference to the
definition of the observable Hamiltonian $H(t)=G(t)+\Sigma(t)$
(denoted, in \cite{FT}, by a tilded symbol $\widetilde{H}(t)$) led
to an immediate evaluation of the isospectral partner Hamiltonian
$\mathfrak{h}(t)$ which has to be, by definition, self-adjoint in
${\cal L}$. }

\textcolor{black}{The latter condition (i.e., constraint
(\ref{hershaft})) appeared to specify all of the functions
$\alpha(t)$, $\beta(t)$, $\gamma(t)$ and $\delta(t)$ (which defined
the preselected and factorized Dyson map) in terms of a real
integration constant $c_1$ and of the two functions $m(t)$ and
$g(t)$ which characterized the preselected generator $G(t)$ (all of
these relations were made explicit by Eq. Nr. (2.11) in \cite{FT}). }

\textcolor{black}{The latter result became feasible due to a certain
auxiliary functional constraint as given by relation Nr. (2.12) in
\cite{FT}. As a consequence, a remarkable reparametrization of
$G(t)$ has been obtained in which $m(t)$ and $g(t)$ appeared
reparametrized in terms of another integration constant $c_2$ and of
an ``arbitrarily reparametrized time''
 $
 \sigma(t)
 $ and its derivatives,
 \be
 g(t)=\frac{1}{4\sigma^{3}(t)}\,,\ \ \ \ \  \
 m(t)=\frac{4c_{2}+\dot{\sigma}^{2}(t)
 -2\sigma(t) \ddot{\sigma}(t)}{4\sigma^{2}(t)}\,.
 \ee
In terms of the latter ``new time'' function $\sigma(t)$ is was
finally possible to write down an explicit (albeit discouragingly
lengthy) formula for the ``conventional textbook'' Hamiltonian
$\mathfrak{h}(t)$ in ${\cal L}$ (cf. formula Nr. (2.14) or, in the
special case with $c_1=c_2=0$, formula Nr. (2.15) in \cite{FT}). }

\textcolor{black}{In this manner the Hermitization $H(t) \to
\mathfrak{h}(t)$ has been completed. Nevertheless, the authors of
paper \cite{FT} also developed an alternative version of the
Hermitization in which one needs to use another ansatz (\ref{fffs})
with a larger $N>4$ and in which the result of the isospectral
mapping $H(t) \to \hat{\mathfrak{h}}(t)$ leads to a less exotic and fully
conventional form of the self-adjoint Hamiltonian $
\hat{\mathfrak{h}}(t)= p^2_y + \mathfrak{v}(y)$. The kinetic
energy term $p^2$ is well separated here from the local and confining
double well potential given, in {\it loc. cit.}, by formula Nr.
(2.19).}


\section{Discussion\label{sectionfive}}

\textcolor{black}{The strength of our present FNIP
approach to the quantum mechanics of closed and unitary
systems lies in
its potential phenomenological applicability
as illustrated by the
natural emergence of the factorization
of the Dyson map in the analytically solvable model of preceding section.
From an alternative, purely pragmatic and numerically oriented point of view
another persuasive motivation and support of the use of
the flexible FNIP
menu has been provided
by a schematic two-state model in Ref.~\cite{[382]} where
strictly one of the representations of the system in question
appeared most user-friendly and, hence, optimal and, from the point of view of
maximal economy, unique. Naturally, the formal and mathematical
preferences of such a type
will always be strongly physics-dependent.}

\subsection{\textcolor{black}{The occurrence of non-stationary non-Hermitian
observables in applications.}\label{udruhesekce}}

\textcolor{black}{
In the context of physics there exist many quantum systems of interest
in which the non-stationary and non-Hermitian versions of observables
play a key role. Besides the highly abstract cosmological problems
and besides the methodical challenges related to the
quantization of gravity (in which a key role is played by the
non-stationary and
non-Hermitian but Hermitizable
Wheeler-DeWitt
equation \cite{Thiemann}), the need of use of
the same, very general class of non-stationary operators also occurs
in the gauge field models \cite{xian},
or in the
atomic and molecular systems  \cite{[112],[113],[thi2]},
or in the field-theory-mimicking
spin systems exposed to a
time-dependent external field \cite{[26]}, etc.
}

\textcolor{black}{
One of the oldest
successful application of a realistic non-Hermitian Hamiltonian of a closed,
unitary quantum system may be found in the
Dyson's phenomenologically motivated
study \cite{Dyson}
devoted to the phenomenon of ferromagnetism.
The main idea lied in the decision of
treatment of a conventional bound-state wave function $|\psi(t)\pkt$
living in a conventional physical Hilbert space ${\cal L}$ as a result of action (\ref{anz})
of a suitable operator $\Omega(t)$ upon a (presumably, simpler) element  $|\psi(t)\kt$
of a (presumably, user-friendlier) auxiliary Hilbert space ${\cal F}$.
Naturally, the use of ansatz (\ref{anz}) can be perceived as a more or less
routine separation and sharing of the degrees of freedom hidden in $|\psi(t)\pkt$
between a ``simplified state vector''  $|\psi(t)\kt$ and an educated guess of some of the
most relevant ``correlations''  $\Omega(t)$.
}

\textcolor{black}{
In the latter sense the Dyson's study may be perceived
as paralleled by several analogous constructive techniques
involving, e.g., the so called
interacting boson model philosophy used in nuclear physics \cite{[7],[8],[9]},
or the flexible
coupled cluster method of
atomic and molecular physics (CCM, cf. Refs. \cite{[2],[6]}), etc.
}


\textcolor{black}{
Even in the Dyson's times the potential efficiency of
ansatz (\ref{anz}) was not unknown. Typically, it was used
in the traditional
Hartree-Fock approach to many-body systems where $|\psi(t)\kt$ is usually chosen
equal to an elementary determinant \cite{Messiah}). Nevertheless,
the essence of the
Dyson-inspired innovation lied in the manifest non-unitarity of
the mapping (cf. (\ref{nonuni})).
Naturally,
even with the stationary version of such an ansatz using constant operators
$\Omega(t)=\Omega(0)=\Omega$ and $\Theta(t)=\Theta(0)=\Theta$
the work with ansatz (\ref{anz}) appeared to lead to
the emergence of multiple serious technical
challenges and obstacles. Their concise list may be found,
e.g., on p. 1216 of review \cite{ali}. This was probably the main reason why even
the applications of the simpler, stationary NSP
version of the formalism
remained restricted, for a long time,
just to some specific physical systems, say,
in nuclear physics \cite{Jenssen}.
The situation only changed after
Bender with Boettcher \cite{BB}
turned attention of the community to the
subset of quasi-Hermitian models
exhibiting the parity-time symmetry
(${\cal PT}$-symmetry, cf. also Appendix A for more comments).
}

\subsection{\textcolor{black}{A competition between} alternative formulations of quantum mechanics}

In the conventional studies of unitary quantum
systems
one mostly has a choice between the theory in
Schr\"{o}dinger picture (SP)
and in
Heisenberg picture (HP).
Occasionally, one encounters also
the third methodical
option called intermediate picture (IP,
cf., e.g., p. 321 in \cite{Messiah}).
The criteria of the choice are usually pragmatic and
purpose-dependent. Thus,
the SP approach appears mathematically
most economical, especially when all of the observables
of interest remain
stationary. In contrast,
the HP ``representation'' is
attractive due to its closest
formal analogy with classical mechanics.
In comparison, the most general IP formulation of quantum mechanics
has traditionally been perceived as a mere technical
complement and
interpolation between the two fundamental
extremes.

In the methodically innovated and generalized QHQM
framework
of Scholtz et al \cite{Geyer},
all of the three (viz, SP, HP and IP) approaches
appeared to play equally fundamental roles
(cf.
\cite{Carl,ali,book}).
In the present paper we paid attention
to the specific
non-Hermitian IP (NIP) philosophy
emphasizing that
the main source of its relevance
lies in
its flexibility.
Indeed, the NIP formalism assumes the knowledge
of certain non-Hermitian
operators of observables with real spectra
(a.k.a. quasi-Hermitian operators \cite{Dieudonne})
which are, in principle at least, Hermitizable
(cf. \cite{SIGMA,timedep,NIP,Bila,Fring}).
This means that the NIP formalism is fully compatible
with the standard textbook postulates of quantum theory.

Such a Hermitization
``involves the construction of a
metric'' \cite{Geyer}.
In the older, NSP stationary approach
the physical Hilbert-space metric
$\Theta$
must necessarily (or at least {\it de facto},
up to an overall multiplication factor) remain  stationary
(cf., e.g., Theorem 2 in \cite{ali}). Currently, it is widely accepted
that
the same metric-stationarity requirement
may be also encountered
beyond the non-Hermitian SP setting, viz.,
in the non-Hermitian
HP approach \cite{NHeisenberg}.

The methodical relevance and practical
importance of the recent mathematical developments
is currently
encouraging
an increase of popularity of
non-Hermitian formalism
\textcolor{black}{beyond the strictly specified
domain of
physics of the closed and
hiddenly Hermitian (i.e., NSP or NIP)
quantum systems.
{\it Pars pro toto}, let us mention
the enormous productivity of the transfer
of the related innovative mathematics, say, to the
quantum theory of open systems \cite{Nimrod}
or to the even less standard nonlinear effective
theories and methods \cite{[thi1],[thi4]}.
}

\subsection{\textcolor{black}{Three most important Hilbert spaces of the theory}}

More than half a century ago,
during his numerical study of a
certain realistic many-body quantum
system Freeman Dyson \cite{Dyson}
revealed that
an optimalization of his
computations can be achieved via
a suitable modification
of the standard diagonalization techniques of textbooks.
The efficiency of
some of his calculations appeared
enhanced when
he replaced
the conventional (and time-independent) self-adjoint
Hamiltonian $\mathfrak{h}$
by its
isospectral
avatar $H$
of Eq.~(\ref{resekv})
where the operator
$\Omega\,$ was not required unitary.
Later on,
the judicious choice of $\Omega$
also succeeded
in a decisive simplification of some other
realistic calculations \cite{Jenssen}.


From a purely methodical perspective
it was not too difficult
to keep in mind that
the self-adjointness of $\mathfrak{h}$ in ${\cal L}$
was equivalent to the $\Theta-$quasi-Hermiticity
of $H$ in ${\cal F}$.
Such an observation
implies that
one might have a choice
between
the
direct
diagonalization of
$\mathfrak{h}$
and
the
preconditioning $\mathfrak{h} \to H$.
At the same time, such a freedom of decision
only emerges when the manifestly self-adjoint
Hamiltonian $\mathfrak{h}$
appears {\it much\,} more complicated than $H$.
One of the reasons is that in contrast to the
entirely elementary
Hermitian-conjugation rule in ${\cal L}$,
the use of its metric-dependent NSP alternative
requires certain additional and, for some
specific models
of quantum dynamics, truly
nontrivial
mathematical analysis \cite{Siegl,Viola,Iveta,Uwe}.

The introduction of a non-trivial QHQM metric $\Theta \neq I$
had to mediate, first of all, an (indirect) return
of the Hamiltonian $H$
to the desirable,
unitarity-compatible
ultimate mathematical status of a ``hidden'' Hermiticity.
Thus,
one could treat its quasi-Hermiticity as a constraint in which
$H$
plays the role of an input information about dynamics
(unconstrained by the over-restrictive Hermiticity requirement \cite{BB})
while the metric
is interpreted as one of the available
Hermitization tools \cite{Lotoreichik}.

The motivation of the split and
the main reason for the introduction of
another space has been found to
lie in the related possibility of a
sufficient
(i.e., in practice, drastic)
simplification of the
inner product. Thus,
in ${\cal F} \equiv {\cal H}_{unphysical}$ one can use the Dirac's
bra-ket convention and write
$\br \psi_1|\psi_2\kt_{(unphysical)}=\br \psi_1|\psi_2\kt$.
In parallel,
the ``correct physical'' inner product in ${\cal H}_{physical}$
can still be represented, in
the preferable and
decisively user-friendlier
auxiliary ``mathematical'' Hilbert space ${\cal H}_{unphysical}$, by
a closed formula
$\br \psi_1|\psi_2\kt_{(physical)}=\br \psi_1|\Theta|\psi_2\kt$.

The
costs of the simplification
of the Hilbert space
${\cal H}_{physical}\to {\cal H}_{unphysical}$
must be acceptable.
Among them,
the best visible item to check is that the (obligatory)
self-adjointness of the observables in ${\cal H}_{physical}$
has to be replaced by the respective
quasi-Hermiticity constraints \cite{Dieudonne}.
In this manner the QHQM formalism can be briefly
characterized by the triplet of the Hilbert spaces
according to the following diagram,
 \be
 \label{krexhu}
  \ba
 \begin{array}{|c|}
 \hline \
   {\rm mathematical}\\
 {\rm representation\  space\ }
 {\cal H}_{unphysical}\,, \\
 {\rm Dirac's\  kets\ } |\psi\kt\,,\
 {\rm Dirac's\  bras\  } \br \psi |\,,
 \\
 {\rm observables\  } =
  {\rm quasi\!\!-\!\!Hermitian}
  \\
  \hline
 \ea
\
 \stackrel{{\rm interpretation\ }(I \to \Theta )}{ \longrightarrow }
 \begin{array}{|c|}
 \hline \
   {\rm ultimate}\\
 {\rm correct\  Hilbert\ space\ }
  {\cal H}_{physical}\,,
 \\
 {\rm Dirac's\ kets\ } |\psi\kt\,,\
 {\rm new\ bras\ } \bbr \psi |\,,
 \\
 {\rm observables\ }
 {\rm made\ self\!\!-\!\!adjoint}
 \\
  \hline
 \ea
 \\
 \ \ \
 \ \ \ \ \ \ \ \
  \stackrel{{\rm  simplification\, }\ }{ }   \nwarrow  \
 \ \ \ \ \ \
 \ \ \ \ \
 \ \ \ \ \ \ \ \
 \ \ \ \
  \  \nearrow \!\!\!\!\swarrow
  \stackrel{{\rm equivalence\ }}{ }
 \ \ \ \ \
\ \ \ \ \\
 \begin{array}{|c|}
  \hline \
   {\rm conventional}\\
   {\rm single\ Hilbert\  space\ }
 {\cal H}_{textbook}\,, \ \\
 {\rm Dirac's\  kets\ } |\psi\pkt\,,\
 {\rm Dirac's\  bras\  } \pbr \psi |\,,
 \\
 {\rm unfriendly\ self\!\!-\!\!adjoint\ observables\  }
 {\rm .}
 \\
  \hline
 \ea
 \\
\ea
 \ee
One can summarize that
any  `dynamical input'' knowledge of
a suitable set of operators characterizing
the observable aspects of a quantum system
in ${\cal H}_{unphysical}$
can lead to a consistent
theoretical Hermitization interpretation
in two ways, viz,
via a return to  ${\cal H}_{textbook}$, or
via a transition to ${\cal H}_{physical}$\,.

Perhaps, the former reconstruction
recipe
has rarely been found feasible. Still, its rather exceptional exemplification
has been offered by Buslaev and Grecchi \cite{BG}
\textcolor{black}{(see also Appendix~A below)}.
They were able to reconstruct a self-adjoint
textbook isospectral partner $\mathfrak{h}$
(acting in ${\cal H}_{textbook}=L^2(\mathbb{R}$)
of a preselected
non-Hermitian operator $H$
(introduced
and defined as acting in ${\cal H}_{unphysical}$)
using strictly analytic means.

\subsection{Hybrid versions of the formalism}

The currently increasing
popularity of the
name ``non-Hermitian''
is partially undeserved
because
its mathematically more correct
version would be ``quasi-Hermitian'' (cf. \cite{Dieudonne}).
The reason is that
one can still easily
restore the probabilistic physical-state
meaning of the kets
$|\psi(t)\kt$
by the mere {\it ad hoc\,} change
of the ``false but standard'' inner product
$\br \psi_a|\psi_b\kt$ in ${\cal F}$.
The point
is that
the
restoration (with its discovery
dating back to review \cite{Geyer})
is straightforward because it is entirely sufficient
to introduce another, third Hilbert space (say, ${\cal H}_{physical}$)
in which the ket-vector elements remain the same
(i.e., we can equally well write $|\psi\kt \in {\cal H}_{physical}$)
and we only have to amend the inner product properly.

In
review \cite{Geyer} we
read that in the quasi-Hermitian NSP framework
``the normal quantum-mechanical interpretation \ldots
involves the construction of a metric (if it exists)''.
In other words, for a return to
${\cal H}_{physical}$
it is sufficient to keep working in ${\cal F}$
(with its conventional Dirac's
bra-ket notation and the current, manifestly unphysical inner product
$\br \psi_a|\psi_b\kt=\br \psi_a|\psi_b\kt_{\cal F}$)
and just introduce its correct physical upgrade in
${\cal H}_{physical}$.

A new and promising compromise between
the use of the trivial and nontrivial physical Hilbert-space metric
has recently been found in
a new, ``hybrid'' reformulation
of the formalism of quantum mechanics in \cite{[380]}.
In it one combines
a {\it partial}, intermediate Hermiticity-violating
preconditioning
of the Hamiltonian $\mathfrak{h} \to H_{1}$
(cf. equation Nr.~13 in \cite{[380]})
with a {\it partial\,} simplification $\Theta \to \Theta_{2}$
of the Hermitian-conjugation-mediating operator {\it alias\,} Hilbert-space
inner-product metric
(cf. equation Nr.~14 in \cite{[380]}).
An
explicit, fairly persuasive demonstration of the
(possible) optimality of the compromise
has been also provided
via an illustrative two-state quantum model
(cf. equations Nr.~24 and 25 in \cite{[380]}).

In a more mathematical spirit let us add that
in our preparatory studies~\cite{[381]} and \cite{[373]} a
fairly complicated notation
was needed to explain the idea of an iterative Hermitization
at $N > 0$.
We had to introduce sophisticated superscripts
marking the individual non-equivalent Hermitian conjugations,
e.g., in the auxiliary sequence of the
Hilbert spaces ${\cal R}_k^{[0]}$ numbered by the subscript.
In the present paper we managed to
simplify the discussion by assuming that all of the
relevant mathematical
entities (i.e., the vectors and the operators)
are defined as represented in the single,
manifestly unphysical but still
mathematically optimal working Hilbert space
${\cal R}_N^{[0]}$ {\it alias\,}
${\cal F}$ with the conventional trivial inner product.

Such a change of notation allowed us to summarize the
essence of the hybrid NSP formalism
in Table \ref{huhybr}. After our transition to the non-stationary NIP
framework it was then sufficient to add
there the indication of the
time-dependence of the maps $\Omega_j=\Omega_j(t)$ and of the metrics
$\Theta^{[j]}=\Theta^{[j]}(t)$.
In comparison with the above-mentioned older studies
we also postulated there that at every $j$ the kets and bras form a bicomplete
and biorthonormal basis in ${\cal F}$.
Thus, even the more mathematically rigorous
readers may still accept Table~\ref{huhybr}
as corresponding to the quantum models for which
the physical Hilbert space in question is finite-dimensional.
Whenever one needs to work with $\dim {\cal R}_0^{[j]} = \infty$,
a complementary reading may be sought, say, in the
dedicated reviews in \cite{book}.

In all three Tables \ref{hybr}, \ref{enhybr} and \ref{huhybr}
one could replace the word ``Hamiltonian''
by the name of any other observable. Nevertheless, our knowledge
of the Hamiltonian $H(t)$
plays a key role in the theory
because it enables us to recall the Stone theorem
\cite{Stone}
and
to relate
the Hermiticity of the Hamiltonian $\mathfrak{h}(t)$
guaranteeing
the unitarity of the evolution
 to the
mathematically equivalent
condition of the
quasi-Hermiticity of the Hamiltonian $H(t)$
in ${\cal F}$.

\section{Summary\label{sectionsix}}

The Dyson's very old idea \cite{Dyson}
of the description of a unitary quantum
system using a manifestly
non-Hermitian
version of its Hamiltonian $H=H^\dagger$
was treated, for a long time, as a special
mathematical trick rather than a deeper
innovation of the theory.
The change only came when
some of the basic conceptual problems
were shown tractable using several additional
mathematical assumptions:
Among them, the most important ones were
the
restrictions of the class of the admissible
observables to the operators which have to be, e.g., bounded
(cf. Scholtz et al
\cite{Geyer})
or pseudo-Hermitian {\it alias\,}
${\cal PT}-$symmetric
(see \cite{BB}
and reviews \cite{Carl,ali}).

After these developments it became widely accepted that
the resulting QHQM theory
offers a fully consistent
theoretical framework and
picture of the evolution of unitary
quantum systems.
In practice, the broadening of the scope of the theory
was accompanied by a steady development of its
technical aspects which made the formalism applicable
not only to the models using the stationary inner products
(in the NSP framework)
but also in the non-stationary NIP scenario.

In our present paper we paid attention to the latter
methodical context and we proposed its efficient
simplification based on the
general $N-$term factorization of the
underlying Dyson transformation operators $\Omega(t)$.
We may summarize that our present main result
may be seen in the explicit description of the consequences
of the latter factorization ansatz. We
revealed that in contrast to the initial scepticism
as expressed in our preparatory study \cite{[382]},
the resulting new FNIP formulation of quantum mechanics
still remains formally transparent and user-friendly.

The core of the survival of the simplicity of
our formalism may be seen in the discovery of the
comparatively elementary formulae which define the
operators (which we called quantum Coriolis forces)
in recurrent way. Precisely this enabled us to
show that the complexity of the
set of the underlying evolution equations
(in which the Coriolis force forms the essential
ingredient)
remains acceptable and that it
basically does not grow with $N$.
Moreover, from the theoretical perspective we may say that
the main new feature of the
formalism is that it enables its user to make a choice
of the correct physical Hilbert space ${\cal R}_0^{[j]}(t)$
out of the menu of $N+1$ eligible candidates
such that $0\leq j \leq N$
(here we included also $j=0$ which would only mark the
conventional Hermitian theory).

Our new formalism incorporates several
existing versions of
the non-stationary non-Hermitian quantum mechanics.
Besides the trivial Hermitian theory with $j=0$
we have to mention
the NIP theory
where
there is no hybrid option
because $N=1$.
Similarly, just a single hybrid option
has been found tractable in
\cite{[382]}
so that no variable index $j$ has been needed, either.
In this sense, the variability of $j-$options
in
our present
quantum theory of bound states
can be interpreted as a new degree of
flexibility, i.e., a new method
of the search for an optimality of
the practical quantum model-building.


\newpage

\section*{\textcolor{black}{Appendix A. Exactly solvable models
using factorized NSP Dyson maps $\Omega\neq \Omega(t)$}\label{appa}}

Our present project has been motivated, first of all,
by the Buslaev's and Grecchi's \cite{BG} discovery that
one of the new and promising
ways towards simplifications could be sought
in the factorization ansatz (\ref{fffs}).
More systematically, the idea
of the decomposition of the
Dyson
map into its more elementary operator components
has been developed in our recent
studies \cite{[380],[381]}.
We demonstrated there, constructively, that
the use of the
composite forms of the preconditioning operator
may really offer,
in the stationary NSP framework at least,
a perceivable simplification of at least some
of the necessary calculations.

\textcolor{black}{The emergence of an ambitious theoretical project of the representation
of an operator of a quantum observable (i.e., typically, of the Hamiltonian with real spectrum) by a
manifestly non-Hermitian operator is usually being traced back to the Dyson's paper \cite{Dyson}
on manybody quantum systems.
Nevertheless, some of the sources of interest in non-Hermitian
Hamiltonians $H \neq H^\dagger$ with real spectra are much older.
One of them dates back to the
quantum-field-theoretically motivated
studies of the behavior of the spectrum of the
elementary ordinary differential anharmonic oscillator
 \be
 H(g)=-\frac{d^2}{dx^2} + x^2 + g\,x^4\,,
 \ \ \ \ \ x \in \mathbb{R}\,.
 \label{dropp}
 \ee
Its older
perturbation-expansion analyses  \cite{Seznec,[6a]}
led, indeed, to certain surprising and more or less
purely empirical observations
that under certain conditions
the spectrum may stay real even when the coupling itself becomes negative,
making
the potential
asymptotically repulsive.
}

\textcolor{black}{
The subsequent growth of interest in the latter anomaly
(with a final resolution
provided by Buslaev and Grecchi \cite{BG}) climaxed in the
Bender's and Boettcher's
discovery \cite{BB} that
the loss of the Hermiticity of the Hamiltonians
need not necessarily be accompanied by
the loss of the reality of the spectrum.
In fact, for the
manifestly non-Hermitian linear differential operators
 \be
 H^{(BB)}(\lambda)=-\frac{d^2}{dx^2} +V^{(BB)}(\lambda,x)\,,
 \ \ \ \ V^{(BB)}(\lambda,x)=({\rm i}x)^\lambda\,x^2
 \label{bobot}
 \ee
an opposite observation has been made:
Their spectrum proved real in the whole open interval of
the exponents $\lambda \in (0,2)$ \cite{Carl}.
}

\textcolor{black}{
The later discovery proved influential and led to the
development of multiple new classes of phenomenological quantum models
(say, in relativistic quantum mechanics)
in which one was allowed to replace the Hermiticity of the Hamiltonian
by a suitable weaker constraint (see, e.g., the comprehensive reviews of these
developments in \cite{Carlbook}).
}

\textcolor{black}{
The consistent probabilistic interpretation of models (\ref{bobot})
appeared to be nontrivial \cite{book}.
Even when the mass term has been dropped in (\ref{dropp}),
Jones and Mateo \cite{JM} managed to prove that
in the limit $\lambda \to 2$
the problem becomes tractable non-numerically. These authors showed that
the two Hamiltonians
 \be
 H^{(JM)}=p^{2}-gx^{4}\,, \ \ \ \
 \mathfrak{h}^{(JM)}=\frac{p^{4}}{64g}-\frac{1}{2}
 p+16gx^{2},  \label{JonesH}
 \ee
are spectrally equivalent, i.e., that the
bound-state spectrum of the
purely
quartic ``wrong-sign'' anharmonic oscillator
(living in a manifestly unphysical
but friendly Hilbert space ${\cal F}$)
coincides with the spectrum of the
conventional double-well model $\mathfrak{h}^{(JM)}$
living in a conventional Hilbert space ${\cal L}$.}

\textcolor{black}{The technique of the proof differed from the one used in \cite{BG}.
Indeed, in a preparatory step, Jones with Mateo deformed the axis of coordinates
into a curved contour in the complex plane, $x\rightarrow -2i\sqrt{1+ix}$.
This simplified the analysis of the solutions in the asymptotic region.
Only then, a sequence of the isospectral transformations based on the factorization
(\ref{fffs}) has been used in the same manner as in \cite{BG}. In other words,
a sequence of the Hamiltonians
has been obtained in closed form, serving well as an illustration of the
factorization-based NSP stationary technique of Ref.~\cite{[381]}.
}

\newpage


\begin{thebibliography}{00}

\bibitem{Messiah}
Messiah, A. \emph{Quantum Mechanics};
North Holland: Amsterdam, The Netherlands, 1961.

\bibitem{Geyer}
%
Scholtz, F. G.;  Geyer, H. B.;
Hahne,  F. J. W.
Quasi-Hermitian Operators in Quantum Mechanics and the Variational Principle.
\emph{Ann. Phys. (NY)} {\bf 1992}, {\em 213}, 74--101.
%

\bibitem{Carl}
Bender, C. M. Making sense of non-Hermitian Hamiltonians. \emph{Rep.
Prog. Phys.} {\bf 2007}, {\em 70}, 947--1118.

\bibitem{SIGMA}
Znojil, M. Three-Hilbert-space formulation of Quantum Mechanics.
\emph{Symm. Integ. Geom. Meth. Appl. SIGMA} {\bf 2009}, {\em 5}, 001.
%

\bibitem{ali}
Mostafazadeh, A. Pseudo-Hermitian Representation of
Quantum Mechanics.
{\emph{Int. J. Geom. Meth. Mod. Phys.}} \textbf{2010}, \emph{7}, 1191--1306.
%

\bibitem{book}
%
Bagarello, F.; Gazeau, J.-P.; Szafraniec, F.; Znojil, M. (Eds.)
\emph{Non-Selfadjoint Operators in Quantum Physics: Mathematical Aspects};
Wiley: {Hoboken, NJ, USA,} 
2015.

\bibitem{Dyson}
Dyson, F.J.
General theory of spin-wave interactions.
\emph{ Phys. Rev.} \textbf{1956}, \emph{102}, 1217.

\bibitem{timedep}
Znojil, M.
Time-dependent version of cryptohermitian quantum theory.
\emph{ Phys. Rev. D} \textbf{2008}, \emph{78}, 085003.

\bibitem{FringMou}
Fring, A.;  Moussa, M. H. Y.
Unitary quantum evolution for time-dependent quasi-Hermitian systems
with non-observable Hamiltonians.
\emph{Phys. Rev. A} \textbf{2016}, \emph{93}, 042114.
%
%

\bibitem{[380]}
Znojil, M. Hybrid form of quantum theory with non-Hermitian Hamiltonians.
\emph{Phys. Lett. A} {\bf 2023}, {\em 457}, 128556.

\bibitem{[381]}
Znojil, M. Systematics of quasi-Hermitian representations
of non-Hermitian quantum models.
\emph{Ann. Phys. (NY)} {\bf 2023}, {\em 448}, 169198.
%

\bibitem{[382]}
Znojil, M. Non-stationary quantum mechanics in hybrid non-Hermitian
interaction representation.
\emph{Phys. Lett. A} {\bf 2023}, {\em 462},  128655.
%

\bibitem{FT}
 \textcolor{black}{
Fring, A.; Tenney, R. (2020).
Spectrally equivalent time-dependent double wells and unstable anharmonic oscillators.
\emph{Physics Letters A} {\bf 2020}, {\em 384}, 126530.}

\bibitem{Coriolis}
Coriolis, G.-G.
 Sur les equations du mouvement relatif des systemes de corps.
 \emph{Journal de l'\'{E}cole Royale Polytechnique} {\bf 1835}, {\em 15}, 144 - 154.
%
%
%
%
%

\bibitem{Coriolisb}
Graney, C. M.
 The Coriolis Effect Further Described in the Seventeenth Century.
%
\emph{Physics Today} \textbf{2017}, \emph{70}, 12.
(https://doi.org/10.1063/PT.3.3610)

\bibitem{Stone}
Stone, M.H. On one-parameter unitary groups in Hilbert Space.
\emph{Ann. Math.} {\bf 1932}, {\em 33}, 643--648.

\bibitem{ju22}
\textcolor{black}{Ju, C.-Y.; Miranowicz, A.;
Minganti, F.; Chan, C.-T.; Chen, G.-Y.;
Nori, F.  Einstein's Quantum Elevator: Hermitization of
Non-Hermitian Hamiltonians via a generalized vielbein Formalism.
{\emph Phys. Rev. Research}
 {\bf 2022}, {\em 4}, 023070.}

 \bibitem{[thi5]}
 \textcolor{black}{Gardas, B.; Deffner, S.; Saxena, A. Repeatability
 of measurements: Non-Hermitian
 observables and quantum Coriolis force.
\emph{Phys. Rev. A} {\bf 2016}, {\em 94},  022121.}


\bibitem{NIP}
Znojil, M. Non-Hermitian interaction representation
and its use in relativistic quantum mechanics.
\emph{Ann. Phys. (NY)} {\bf 2017}, {\em 385}, 162--179.
%

\bibitem{Bishop}
Bishop, R. F.; Znojil, M. Non-Hermitian coupled cluster
method for non-stationary systems
and its interaction-picture reinterpretation.
\emph{Eur. Phys. J. Plus} {\bf 2020}, {\em 135}, 374.

\bibitem{Andreas}
 Luiz,  F. S.; Pontes, M. A.; Moussa, M. H. Y.
Unitarity of the time-evolution and observability of non-Hermitian
Hamiltonians for time-dependent Dyson maps.
%
%
\emph{Phys. Scr.}
 {\bf 2020},
{\em 95},
065211.
(arXiv: 1611.08286).

\bibitem{Carlbook}
Bender, C.M. \emph{PT Symmetry in Quantum and Classical Physics};
World Scientific: Singapore, 2018.


\bibitem{Brody}
Brody, D. C.
Biorthogonal quantum mechanics.
\emph{J. Phys. A: Math. Theor.}
{\bf 2013},
{\em 47},
      035305.
%

\bibitem{[thi6]}
 \textcolor{black}{Matsoukas-Roubeas, A. S.; Roccati, F.; Cornelius, J.;
  Xu, Zh.; Chenu, A.; del Campo, A.
 Non-Hermitian Hamiltonian
 deformations in quantum mechanics.
%
\emph{J. High Energ. Phys.} {\bf 2023}, {\em 2023}, 60
(https://doi.org/10.1007/JHEP01(2023)060).}

 \bibitem{BG}
Buslaev, V.; Grecchi, V. {Equivalence of unstable anharmonic
oscillators and double wells.} \emph{J. Phys. A Math. Gen.}
\textbf{1993}, \emph{26}, 5541--5549.

\bibitem{JM}
%
 \textcolor{black}{
Jones, H. F.;  Mateo, J.
An Equivalent Hermitian Hamiltonian for the non-Hermitian $-x^4$
Potential.
\emph{Phys. Rev. D} {\bf 2006}, {\em 73},  085002.}

\bibitem{Thiemann}
Thiemann, T.
\emph{Introduction to Modern Canonical Quantum General Relativity};
Cambridge University Press: Cambridge, UK, 2007.

\bibitem{xian}
\textcolor{black}{
Bishop, R. F.; Kendall, A. S.; Wong, L. Y.; Xian, Y.
Correlations in Abelian lattice gauge field models:
A microscopic coupled-cluster treatment.
\emph{Phys. Rev. D} {\bf 1993}, {\em 48}, 887.}

\bibitem{[112]}
\textcolor{black}{Gross, E. P.
Structure of a quantized vortex in boson systems.
\emph{Nuovo Cim.} {\bf 1961}, {\em 20}, 454.}

\bibitem{[113]}
\textcolor{black}{Pitaevskii, L. P.
Vortex lines in a imperfect Bose gas.
\emph{Zh. Eksp. Teor. Fiz.} {\bf 1961}, {\em 40}, 646.}

 \bibitem{[thi2]}
 \textcolor{black}{
 Glushkov, A. V. Relativistic quantum theory.
 Quantum mechanics of atomic systems. Astroprint, Odessa, 2008}

\bibitem{[26]}
\textcolor{black}{Rabi, I. I.
Space Quantization in a Gyrating Magnetic Field.
\emph{Phys. Rev. } {\bf 1937}, {\em 51}, 652.}

\bibitem{[7]}\textcolor{black}{
Arima, A.; Iachello, F.
Interacting boson model of collective states I. The vibrational limit.
 \emph{Ann. Phys. (NY)} {\bf 1976}, {\em 99}, 253 - 317.}

\bibitem{[8]}\textcolor{black}{
Arima, A.; Iachello, F.
Interacting boson model of collective nuclear states II. The rotational limit.
 \emph{Ann. Phys. (NY)} {\bf 1978}, {\em 111}, 201 - 238.}

\bibitem{[9]}\textcolor{black}{
Arima, A.; Iachello, F.
Interacting boson model of collective nuclear states IV. The O(6) limit.
 \emph{Ann. Phys. (NY)} {\bf 1979}, {\em 123}, 468 - 492.}

\bibitem{[2]}
 \textcolor{black}{Arponen, J.
 Variational principles and linked-cluster exp S expansions
 for static and dynamic many-body problems.
\emph{Ann. Phys. (NY)} {\bf 1983}, {\em 151}, 311 - 382.}

\bibitem{[6]}
 \textcolor{black}{Bishop, R. F.
 An overview of coupled cluster theory and its applications in physics.
\emph{Theor. Chim. Acta} {\bf 1991}, {\em 80}, 95 - 148.}

\bibitem{Jenssen}
Janssen, D.; D\"{o}nau, F.; Frauendorf, S.; Jolos, R. V. Boson
description of collective states.
{\emph Nucl. Phys.
A} \textbf{1971},
{\emph 172}, 145
 - 165.

\bibitem{BB}
{Bender, C. M.; Boettcher, S.}
Real Spectra in Non-Hermitian Hamiltonians Having
PT Symmetry.
\emph{Phys. Rev. Lett.} \textbf{1998}, \emph{80}, 5243.
%

\bibitem{Dieudonne}
Dieudonne, J. Quasi-Hermitian Operators. In
\emph{Proc. Int. Symp. Lin. Spaces}, Pergamon: Oxford, UK, 1961,
{pp. 115--122}.

\bibitem{Bila}
B\'{\i}la, H.
Adiabatic time-dependent metrics in PT-symmetric
quantum theories.
\emph{
  e-print
 arXiv: 0902.0474}, {bf. 2009}.

 \bibitem{Fring}
 \textcolor{black}{
Fring, A.; Frith, T. Time-dependent metric for the two-dimensional,
non-Hermitian
coupled oscillator.
\emph{Mod. Phys. Lett. A} {\bf 2020}, {\em 35}, 2050041.}

\bibitem{NHeisenberg}
Znojil, M.
 Non-Hermitian Heisenberg representation.
\emph{Phys. Lett. A} \textbf{2015}, \emph{379},  20132017.

 \bibitem{Nimrod}
 \textcolor{black}{Moiseyev, N. Non-Hermitian Quantum Mechanics.
 Cambridge Univ. Press, Cambridge, 2011.}

 \bibitem{[thi1]}
 \textcolor{black}{Sulem, C.; Sulem, P.-L. The Nonlinear Schr\"{o}dinger
 Equation. Springer-Verlag, New York, 1999.}

 \bibitem{[thi4]}
 \textcolor{black}{Pang, X.-F. Nonlinear Quantum Mechanics and its
 Applications. Nova Science Pub Inc, New York, 2015.}

\bibitem{Siegl}
Siegl, P.; Krej\v{c}i\v{r}\'{\i}k, D.
On the metric operator for the imaginary cubic oscillator.
\emph{ Phys. Rev. D} \textbf{2012}, \emph{86},  121702(R).

\bibitem{Viola}
Krej\v{c}i\v{r}\'{\i}k, D.; Siegl, P.; Tater, M.; Viola, J.
Pseudospectra in non-Hermitian quantum mechanics.
\emph{J. Math. Phys. } \textbf{2015}, \emph{56}, 103513.
%
%
%

\bibitem{Iveta}
Semor\'{a}dov\'{a}, I.; Siegl, P.
Diverging eigenvalues in domain truncations of Schroedinger
operators with complex potentials.
\emph{SIAM J. Math. Anal.} \textbf{2022}, \emph{54}, 5064-5101.

\bibitem{Uwe}
Guenther, U.; Stefani. F.
IR-truncated PT -symmetric $ix^3$ model and its asymptotic
spectral scaling graph. arXiv 1901.08526.

\bibitem{Lotoreichik}
Krej\v{c}i\v{r}\'{\i}k, D.;
 Lotoreichik, V; Znojil, M.
The minimally anisotropic metric operator in quasi-hermitian quantum mechanics.
\emph{Proc. Roy. Soc. A: Math. Phys.
Eng. Sci.}
 {\bf 2018},
 {\em 474},
      20180264.
%
%
%

\bibitem{[373]}
Znojil, M. Feasibility and method of multi-step Hermitization
of crypto-Hermitian quantum Hamiltonians.
\emph{Eur. Phys. J. Plus} \textbf{2022}, \emph{137}, 335.
%
%

\bibitem{Seznec}
 \textcolor{black}{
Seznec, R.; Zinn-Justin, J. Summation of divergent series by order dependent mappings:
Application to the anharmonic oscillator and critical exponents in field theory.
\emph{J. Math.
Phys.} {\bf 1979}, {\em 20}, 1398 - 1408.}

 \bibitem{[6a]}
 \textcolor{black}{
Graffi, S.; Grecchi, V. The Borel sum of the double-well perturbation series and the
Zinn-Justin conjecture.
\emph{Phys. Lett. B} {\bf 1983}, {\em 121}, 410 - 414.}



\end{thebibliography}
\end{document}